\documentclass[12pt]{iopart}

%Uncomment next line if AMS fonts required
%\usepackage{iopams}
\usepackage{enumerate}
\usepackage{amssymb}

%These added by AJK 2 June 07
\usepackage{amsthm}
\newtheorem{thm}{Theorem}

\newcommand{\be}{\begin{equation}}
\newcommand{\ee}{\end{equation}}
\newcommand{\ba}{\begin{eqnarray}}
\newcommand{\ea}{\end{eqnarray}}
%\eqnobysec

\begin{document}

\title[Conformal symmetry classes for pp-wave spacetimes]
{Conformal symmetry classes for pp-wave spacetimes}

\author{Aidan J Keane$^1$ and Brian O J Tupper$^2$}
\address{$^1$ Department of Physics and Astronomy, University of Glasgow, \\
Glasgow G12 8QQ, Scotland, UK\\
$^2$ Department of Mathematics and Statistics, University of
New Brunswick\\Fredericton, New Brunswick, E3B 5A3, Canada}

\begin{abstract}
We determine conformal symmetry classes for the pp-wave spacetimes. This refines the isometry classification scheme given by Sippel and Goenner (1986 {\it Gen. Rel. Grav.} {\bf 18} 1229). It is shown that every conformal Killing vector for the null fluid type $N$ pp-wave spacetimes is a conformal Ricci collineation. The maximum number of proper non-special conformal Killing vectors in a type $N$ pp-wave spacetime is shown to be three, and we determine the form of a particular set of type $N$ pp-wave spacetimes admitting such conformal Killing vectors. We determine the conformal symmetries of each type $N$ isometry class of Sippel and Goenner and present new isometry classes.
\end{abstract}

%Uncomment for PACS numbers title message
\pacs{04.20.Jb, 04.40.Nr, 02.40.Ky}
% Keywords required only for MST, PB, PMB, PM, JOA, JOB?
%\vspace{2pc}
%\noindent{\it Keywords}: Article preparation, IOP journals
% Uncomment for Submitted to journal title message
%\submitto{\CQG}
% Comment out if separate title page not required
%\maketitle

\section{Introduction}
\label{sec:intro}
We define a {\it pp-wave spacetime} $M$ to be a non-flat spacetime
which admits a covariantly constant, nowhere zero, null bivector.
This is equivalent to $M$ admitting a global, covariantly constant, nowhere zero, null vector and either
(i) $M$ having Weyl tensor of Petrov type $N$ or $O$ at each $p \in M$, or
(ii) $M$ having Ricci tensor either zero or Segre type $\{ (211) \}$ with zero eigenvalue at each $p \in M$ \cite{hall90}.
We note that in \cite{kramer80} a pp-wave is defined to be a spacetime admitting only a global, covariantly constant, nowhere zero, null vector.
In the following we shall use the notation $x^a \equiv (u,v,y,z) \equiv (u, v, x^A)$ where
$a=0,1,2,3$ and $A=2,3$. The notation ${\cal A} \supset {\cal B}$ means ${\cal B}$ is a subalgebra of ${\cal A}$.
The line element for the pp-wave spacetime can be written \cite{ehlers62}
\be
ds^2 =  - 2 du dv - 2 H(u,x^A) du^2 + \delta_{AB}dx^A dx^B \: ,
\label{eqn:ppw}
\ee
and the null covariantly constant vector $k$ is necessarily a Killing vector and has the form
\be
k^a = \delta^a_v \; , \qquad k_a = - \delta^u_a \: .
\label{eqn:kvk}
\ee
Sippel and Goenner \cite{sippel86} give the form of the Riemann, Weyl and Ricci tensors for the pp-wave spacetime, the latter being
$R_{ab} = F k_a k_b$ where $F = H_{,yy} + H_{,zz}$.
Ehlers and Kundt \cite{ehlers62} give the form of the covariantly constant null bivector.
If the Weyl tensor is nowhere zero then $R_{abcd} k^d = 0$ and $k$ is a repeated principal null direction of
the Weyl tensor \cite{hall90}.
The spacetime (\ref{eqn:ppw}) is vacuum if $H_{,yy} + H_{,zz} = 0$ and conformally flat if $H_{,yy} = H_{,zz}$ and $H_{,yz}=0$.
Further, if the metric function can be put in the form
\be
2H = A(u) y^2 + 2B(u) yz + C(u) z^2 \: ,
\label{eq:H}
\ee
then the spacetime is referred to as a {\it plane wave spacetime}.

Let $M$ be a four-dimensional spacetime manifold with metric tensor $g$ of Lorentz
signature. Any vector field $Y$ which satisfies
\be
{\cal L}_Y \,  g = 2 \psi(x^\alpha) \, g \label{eqn:ckv}
\ee
is said to be a {\it conformal Killing vector} (CKV) of $g$.
If $\psi$ is not constant on $M$ then $Y$ is called a {\it proper
conformal Killing vector}. If $\psi_{;ab} = 0$, then $Y$ is
called a {\it special conformal Killing vector} (SCKV).
If $\psi$ is constant on $M$ then $Y$ is
called a {\it  homothetic Killing vector} (HKV)
and if $\psi=0$ then $Y$ is said to be a
Killing vector (KV).
An SCKV is called a proper SCKV if $\psi_{;a} \ne 0$ and an HKV is called a proper HKV if
$\psi \ne 0$.
The set of all CKV (respectively, SCKV, HKV and KV) form a finite-dimensional Lie algebra
denoted by ${\cal C}$ (respectively, ${\cal S}$, ${\cal H}$ and ${\cal G}$).
The maximum dimension of the algebra of CKV on $M$ is 15 and this is
achieved if $M$ is conformally flat. If the spacetime is not
conformally flat, then the maximum dimension is seven \cite{hall91}, and this occurs for
special type $N$ models \cite{hall91}, \cite{defrise75}. The spacetimes in question are necessarily conformally related to a  plane wave spacetime, which in turn is necessarily
conformally related to the vacuum plane wave spacetime \cite{hall90}.
To be specific, a type $N$ spacetime with a seven-dimensional conformal algebra is conformally related to a plane wave spacetime with ${\cal H}_7$.
Further, a type $N$ spacetime with a six-dimensional conformal algebra is conformally related to one of two special pp-wave spacetimes.
By the Defrise-Carter theorem \cite{defrise75, hall91} the spacetime is conformally reducible to either a type $N$
plane wave with an ${\cal H}_6$ or a non-plane wave type $N$ spacetime with a ${\cal G}_6$.
There is only one type $N$ spacetime with a ${\cal G}_6$ (metric (12.6) in \cite{kramer80}) and it takes the form
\be
ds^2 = y^{-2} (- 2 du dv - 2 y^{-2} du^2 + dy^2 + dz^2) \, .
\label{eq:defriseG6}
\ee
However, this spacetime is conformally related to a particular type of pp-wave with an ${\cal S}_6 \supset {\cal H}_5$,
i.e., class $Biv$ of section \ref{sec:typeN}.
In this paper we investigate the conformal symmetry properties of the entire class of pp-wave spacetimes.
We determine the Lie algebra structures, which are unaffected by a conformal rescaling of
the metric.

Ehlers and Kundt \cite{ehlers62} determined the isometry classes of pp-wave spacetimes satisfying the vacuum field
equations, Sippel and Goenner \cite{sippel86} determined the general form for the KV of a general pp-wave spacetime (i.e., without the vacuum
restriction) and generalized the classification in \cite{ehlers62} to include non-vacuum pp-wave spacetimes.
Maartens and Maharaj \cite{maartens91} determined the general form for the CKV of a pp-wave spacetime and, in particular,
give expressions for the HKV and SCKV in such spacetimes. They also give examples of non-special CKV.
Note the difference in signature of the metrics and sign difference
in the definition of $H$ in \cite{sippel86} and \cite{maartens91}.
Hall {\it et al} \cite{hall92} determined the CKV for the
general conformally flat pp-wave spacetime. The conformal symmetries of the
class of non-twisting type $N$ vacuum solutions (with and without a cosmological constant) are listed in \cite{humberto83},
see also \cite{garcia81}. Siklos \cite{siklos85} determines the KVs of the Lobatchevski plane wave spacetimes,
which are conformally related to certain pp-wave spacetimes. Aichelburg and Balasin \cite{aichelburg96}
determine the KV of pp-wave spacetimes with distributional profiles, and in \cite{aichelburg97} determine a set of
generalized symmetries for such spacetimes.

The conformal symmetries of the type $O$ pp-wave spacetimes
can be dealt with in a straightforward manner.
We can represent any such spacetime in a conformally flat coordinate system:
Consider Minkowski spacetime in coordinates $ds^2 = -2dudv + dy^2 + dz^2$,
then
\[
d{\bar s}^2 = \Omega^2(u) \,  (-2dudv + dy^2 + dz^2)
\]
is, from theorem 3 of section 4 (see also \cite{tupper03}), a plane wave spacetime with line element
\[
d{\bar s}^2 = \Omega^{-1}(\bar u) \Omega(\bar u)_{,\bar u \bar u} (\bar y^2 + \bar z^2) d\bar u^2 -2d \bar u d \bar v + d \bar y^2 + d \bar z^2.
\]
The conformal algebra is simply the standard $so(4,2)$ conformal algebra of Minkowski spacetime.

It remains for us to consider the conformal symmetry properties of the type $N$ pp-wave spacetimes. Only vacuum and null fluid
spacetimes are possible, and the following statements are useful.
The field equations for a null fluid are (cosmological constant $\Lambda=0$)
\be
G_{ab} = R_{ab} = F \, k_a k_b \; ,
\label{eq:nullfluid}
\ee
where $F$ is a non-zero function on $M$, $k_a k^a = 0$ and so $R={R^a}_a = 0$. We shall make use of the following theorem.

\begin{thm} \label{thm1}
Every CKV in a non-conformally flat null fluid spacetime with Ricci tensor (\ref{eq:nullfluid}), $k$ being one of the principal null directions
of the Weyl tensor, is a conformal Ricci collineation and $\psi_{;ab}$ is proportional to $k_a k_b$.
\end{thm}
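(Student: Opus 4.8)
The plan is to reduce everything to the behaviour of the null direction $k$ under the flow of the CKV, exploiting the conformal invariance of the Weyl tensor. I would begin by recording the standard four-dimensional identity for the Lie derivative of the Ricci tensor along a CKV $Y$ (with $\mathcal{L}_Y g_{ab} = 2\psi g_{ab}$): from $\mathcal{L}_Y \Gamma^c_{ab} = \delta^c_a \psi_{,b} + \delta^c_b \psi_{,a} - g_{ab}\psi^{,c}$ and $\mathcal{L}_Y R_{ab} = \nabla_c \mathcal{L}_Y \Gamma^c_{ab} - \nabla_b \mathcal{L}_Y \Gamma^c_{ac}$ one obtains
\be
\mathcal{L}_Y R_{ab} = -2\,\psi_{;ab} - g_{ab}\, g^{cd}\psi_{;cd}\,.
\label{eqn:lieric}
\ee
This already shows that both conclusions of the theorem follow at once provided I can prove that $\mathcal{L}_Y R_{ab}$ is proportional to $k_a k_b$.

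The central geometric step is that the flow of any CKV preserves the repeated principal null direction. Since the mixed Weyl tensor $C^a{}_{bcd}$ is unchanged under a conformal rescaling of the metric, and the local flow of a CKV is a one-parameter family of conformal transformations, we have $\mathcal{L}_Y C^a{}_{bcd} = 0$. On the other hand, the hypothesis that $k$ is a principal null direction of the (type $N$, hence non-vanishing) Weyl tensor is the statement $C_{abcd}k^d = 0$, and for a genuinely type $N$ Weyl tensor $k$ is, up to scale, the unique null direction with this property. Applying $\mathcal{L}_Y$ to $C^a{}_{bcd}k^d = 0$ and using $\mathcal{L}_Y C^a{}_{bcd} = 0$ gives $C^a{}_{bcd}(\mathcal{L}_Y k^d) = 0$, so $\mathcal{L}_Y k^d$ lies in the same one-dimensional kernel, forcing $\mathcal{L}_Y k^a = \alpha k^a$ for some function $\alpha$. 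Lowering the index with $\mathcal{L}_Y g_{ab} = 2\psi g_{ab}$ then yields $\mathcal{L}_Y k_a = (2\psi + \alpha)k_a \equiv \beta k_a$.

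The remainder is a short computation. Acting with $\mathcal{L}_Y$ on $R_{ab} = F k_a k_b$ and using $\mathcal{L}_Y F = Y^c F_{,c}$ together with $\mathcal{L}_Y k_a = \beta k_a$ gives
\be
\mathcal{L}_Y R_{ab} = \left( Y^c F_{,c} + 2\beta F \right) k_a k_b\,,
\label{eqn:liericF}
\ee
which is proportional to $R_{ab}$; hence $Y$ is a conformal Ricci collineation. Comparing (\ref{eqn:lieric}) with (\ref{eqn:liericF}), contracting with $g^{ab}$ and using $k^a k_a = 0$ forces $g^{cd}\psi_{;cd} = 0$, whereupon the surviving part of (\ref{eqn:lieric}) reads $\psi_{;ab} = -\frac{1}{2}\left( Y^c F_{,c} + 2\beta F \right) k_a k_b$, so $\psi_{;ab}$ is proportional to $k_a k_b$, as required.

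The step I expect to be the main obstacle is the middle one: establishing cleanly that $\mathcal{L}_Y k^a \propto k^a$. This hinges on two facts that must be stated with care — that $\mathcal{L}_Y C^a{}_{bcd} = 0$ for every CKV, and that for a non-conformally-flat (type $N$) Weyl tensor the equation $C_{abcd}w^d = 0$ determines $w$ uniquely up to scale as the repeated principal null direction $k$. The uniqueness can be checked directly from the canonical type $N$ form of the Weyl tensor in a null tetrad adapted to $k$; equivalently, in the pp-wave coordinates (\ref{eqn:ppw}) the condition reduces to a non-degenerate $2 \times 2$ system in the transverse directions together with the vanishing of the remaining components, and it is exactly here that non-conformal-flatness, i.e.\ $C_{abcd} \neq 0$, is essential.
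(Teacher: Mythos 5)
Your proof is correct and follows essentially the same route as the paper: establish $\mathcal{L}_Y k_a \propto k_a$, Lie-differentiate $R_{ab}=Fk_ak_b$ to get the conformal Ricci collineation, then combine with the Yano identity and its trace (using $k^ak_a=0$, equivalently $R=0$) to force $\square\psi=0$ and $\psi_{;ab}\propto k_ak_b$. The only difference is that you supply an argument for $\mathcal{L}_Y k_a=\alpha k_a$ via the conformal invariance of $C^a{}_{bcd}$ and the uniqueness of the type $N$ principal null direction, whereas the paper simply cites this fact from Hall and Steele.
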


\begin{proof}
From (\ref{eq:nullfluid}) it follows that
${\cal L}_Y \, R_{ab} = ({\cal L}_Y F) \, k_a k_b + F {\cal L}_Y (k_a k_b)$.
Since ${\cal L}_Y k_a = \alpha k_a$ for some function $\alpha$ \cite{hall91}, it follows that the CKV satisfies
${\cal L}_Y R_{ab} = 2 \eta R_{ab}$, $2 \eta = {\cal L}_Y (\ln F) + 2\alpha$,
that is, $Y$ is a conformal Ricci collineation.
Further, it can be shown that \cite{yano55}
\ba
{\cal L}_Y R_{ab} & =  -2 \psi_{;ab} - g_{ab} \square \psi , \label{eq:riccit}
\\
{\cal L}_Y R & =  -2 \psi R - 6 \square \psi \: \: , \label{eq:riccis}
\ea
where $\square \psi = g^{ab} \psi_{;ab}$. Equation (\ref{eq:riccis}) then gives $\square \psi=0$ and then (\ref{eq:riccit}) gives
${\cal L}_Y R_{ab} = -2  \psi_{;ab}$. It follows that $\psi_{;ab} = - \eta \, F \, k_a k_b$.

\end{proof}

Maartens and Maharaj \cite{maartens91} give the general form for the CKV of a
general (type $N$ or type $O$) non-flat pp-wave spacetime as (here and throughout a prime denotes differentiation with respect to $u$)
\ba
Y^u & = & \mu \delta_{AB} x^A x^B / 2 + a_A(u) x^A + a(u) \: , \nonumber\\
Y^v & = & \mu v^2 + [a'_A(u) x^A + 2b(u) -a'(u)] v +F(u, x^A) \: , \nonumber\\
Y^A & = & [\mu x^A + a_A(u)] v + \gamma_{ABCD} a'_B(u) x^C x^D + b(u) x^A + c(u) \epsilon_{AB} x^B + c_A(u) \: , \nonumber
\ea
and the conformal scalar $\psi$ and $\psi_{;ab}$ as
\ba
\fl \psi =  \mu v + a'_A(u) x^A + b(u) \: , \\
\fl \psi_{;ab} = (a'''_A(u) x^A + b''(u) - \mu H_{,u} -a'_A(u) H_{,A}) k_a k_b
+ 2 (a''_B(u) - \mu H_{,B}) k_{(a} {x^B}_{,b)} \: , \label{eq:psiab}
\ea
where $\gamma_{ABCD}=\delta_{AB} \delta_{CD}/2 - \epsilon_{AC} \epsilon_{BD}$, $\mu$ is a constant,
$a$, $b$, $c$, $a_A$, $c_A$ and $F$ are functions and the following conditions are satisfied
(equations (16) and (17) in  \cite{maartens91})
\ba
{Y^v}_{,A} + 2H {Y^u}_{,A} - {Y^A}_{,u} & = & 0 \: , \label{eq:Hcondition1}\\
{Y^v}_{,u} + H_{,a} Y^a + 2H {Y^u}_{,u} & = & 2 \psi H \: . \label{eq:Hcondition2}
\ea
Extra conditions are imposed on $H$ by the integrability conditions $F_{,23}=F_{,32}$ and $F_{,uA}=F_{,Au}$, which are equations
(36) and (37) in  \cite{maartens91}. Note that $4 \psi = {Y^a}_{;a} = {Y^a}_{,a}$ for a pp-wave spacetime.
The general form for the SCKV in the pp-wave spacetime is given by \cite{maartens91}
\be
W = \rho S + \phi Z + X
\label{eq:generalSCKV}
\ee
where
\ba
\fl S & = & u^2 \partial_u + \case{1}{2} \delta_{AB} x^A x^B \partial_v
+ u \, x^A \partial_A \: , \\
\fl Z & = & 2 v \partial_v
+ x^A \partial_A \: , \label{eq:vectorZ}\\
\fl X & = & (\alpha u + \beta) \partial_u +
(\lambda(u) - \alpha v + c'_A(u) x^A) \partial_v +
(\gamma \epsilon_{AB} x^B + c_A(u)) \partial_A \: ,
\ea
with corresponding functions $a(u) = \rho u^2 + \alpha u + \beta$ and $\psi = \rho u + \phi$, where
$\rho$, $\alpha$, $\beta$, $\gamma$ and $\phi$ are constants. In \cite{maartens91} it is stated that $\lambda(u)$ can be
replaced by a constant using equation (10) of \cite{maartens91}, however we have found examples where this is not the case
e.g., the HKV (\ref{eq:case6HKV}) in section \ref{sec:SandG}.
Note that the most general form for an HKV is given by $\phi Z + X$ and the most general form for a KV is given by $X$.
However, if a spacetime admits an SCKV given by (\ref{eq:generalSCKV}), this {\it does not} necessarily mean that the vector field
$\phi Z + X$ is an HKV or that the vector field $X$ is a KV.
For a general spacetime, $\dim {\cal H}$ is at most one greater than $\dim {\cal G}$ and for an arbitrary non-flat spacetime
$\dim {\cal S}$ is at most one greater than $\dim {\cal H}$ \cite{hall90ii}.

We aim to give a classification of the pp-wave spacetimes according to their conformal symmetries.
This is greatly simplified by the fact that the Petrov type is $N$ or $O$.
Further simplification is allowed by the fact that the energy momentum tensor is either zero or that of a
null fluid spacetime, in the former case only SCKV are possible (from equations (\ref{eq:riccit}) and (\ref{eq:riccis}))
and in the latter we make use of theorem 1 above.
In section \ref{sec:typeN} it is shown that the form of the general CKV in a type $N$ pp-wave spacetime
can be restricted further than shown in \cite{maartens91}.
The maximum number of non-special CKVs in a type $N$ pp-wave spacetime is shown to be three.
In section \ref{sec:typeN} we examine a particular class of such spacetimes which admit three non-special CKVs. Further, we identify
additional isometry classes, not appearing in \cite{sippel86}.
In section \ref{sec:SandG} we determine the form of the general CKV in
each of the classes 1 - 14 of \cite{sippel86} and attempt to enumerate all possible Lie algebra structures. Some new isometry classes are introduced.
It transpires that none of the Sippel and Goenner spacetime classes, except those in class 10 (which admits an ${\cal H}_6$)
and its specializations (which admit an ${\cal H}_6$ subalgebra), automatically admits any further CKV.
However, further CKV are admitted in certain cases when the
functional form of $H$ is further restricted. Isometry classes 5-14 are resolved completely.
However, classes 1-4 prove to be troublesome on account of the fact that the condition
(\ref{eq:Hsatisfy}) leads to partial differential equations in general.
In section \ref{sec:typeNdual} we present a set of conformally related pp-wave spacetimes and determine their conformal
symmetries.

\section{CKV for type $N$ pp-wave spacetimes}
\label{sec:typeN}
Let us now specialize to the type $N$ spacetimes.
It follows from theorem 1 that for an arbitrary CKV in a type $N$ null fluid pp-wave spacetime, $\psi_{;ab}$ is {\it always} proportional to $k_a k_b$. (Of course, for the vacuum spacetimes $\psi_{;ab}$ is identically zero.) Thus the last bracketed term appearing in equation (\ref{eq:psiab}) is zero for all CKV, that is,
\be
a''_A(u) - \mu H,_A = 0 \: .
\ee
Since we only wish to consider non-flat pp-wave spacetimes then we must have
$\mu = 0$ and $a''_A(u)=0$ \cite{maartens91}.
Equations (33) and (36) of \cite{maartens91} then force $a_A(u)=0$, $a''(u) = 2b'(u)$ and $c'(u)=0$.
Thus
\be
a'(u) =  2 b(u) + \Theta,
\label{eq:adashed}
\ee
where $\Theta$ is a constant and $c=\gamma$ is a constant.
Then the components of the CKV have the form
\ba
Y^u & = & a(u) \: , \nonumber\\
Y^v & = & -\Theta v + a''(u) \, \delta_{AB} x^A x^B/4 + c'_B(u) \, x^B + E(u) \: , \nonumber\\
Y^A & = & (a'(u) - \Theta) \, x^A / 2 + \gamma \, \epsilon_{AB} x^B + c_A(u) \: , \label{eq:refinedCKV}
\ea
where $E$ is a function of $u$ and the conformal scalar $\psi$ is a function of $u$ only
\be
\psi = b(u)=(a'(u) - \Theta)/2 \; .
\label{eq:newpsi}
\ee
Further
\be
\psi_{;ab} = \psi'' k_a k_b = a'''(u) k_a k_b / 2 \: .
\label{eq:psiab2}
\ee
Equation (\ref{eq:Hcondition1}) is satisfied identically by (\ref{eq:refinedCKV}) and equation (\ref{eq:Hcondition2}) becomes
\be
H_{,a} Y^a + {Y^v}_{,u} = - (a'(u) + \Theta) \, H \: .
\label{eq:Hsatisfy}
\ee
Further, the integrability conditions $F_{,23}=F_{,32}$ and $F_{,uA}=F_{,Au}$ are satisfied identically.
Note that for SCKV, the constant $\phi$ in equation (\ref{eq:generalSCKV}) is given by $\phi = (\alpha - \Theta)/2$.

For the moment, let us convert to polar coordinates $y = r \cos \theta$ and $z = r \sin \theta$. The metric (\ref{eqn:ppw}) becomes
\be
ds^2 = -2 du dv -2H(u, r, \theta) \, du^2 + dr^2 + r^2 d \theta^2 \, ,
\ee
and, relabelling $c_y(u)$, $c_z(u)$ as $c_2(u)$, $c_3(u)$, respectively, the CKV (\ref{eq:refinedCKV}) becomes
\ba
Y^u & = & a(u) \: , \nonumber\\
Y^v & = & -\Theta v + a''(u) \, r^2 / 4 + c'_2(u) r \cos \theta + c'_3(u) r \sin \theta + E(u) \: , \nonumber\\
Y^r & = & (a'(u) - \Theta) \, r / 2 + c_2(u) \cos \theta + c_3(u) \sin \theta \: , \nonumber\\
Y^{\theta} & = & - \gamma + ( - c_2(u) \sin \theta + c_3(u) \cos \theta ) / r \: .
\label{eq:refinedCKVpolars}
\ea
Equation (\ref{eq:Hsatisfy}) becomes
\be
a'''(u) + L(u ,r ,\theta) \, a'(u) + M(u, r, \theta) \, a(u) = N (u, r, \theta) \, ,
\label{eq:thirdorder}
\ee
where
\ba
\fl L = 4 r^{-2} (H + r H_{,r} / 2)  \, , \qquad M = 4 r^{-2} H' \, , \nonumber\\
\fl N = 4 r^{-2} [(- H + r H_{,r} / 2) \, \Theta - r (c''_2(u) \cos \theta + c''_3(u) \sin \theta)
-H_{, r} (c_2(u) \cos \theta + c_3(u) \sin \theta) \nonumber\\
- H_{, \theta} (- \gamma + (-c_2(u) \sin \theta + c_3(u) \cos \theta)r^{-1})  - E'(u) ] \, .
\nonumber
\ea
Each CKV, $Y$, is characterized by a set of four functions of $u$ and two constants, i.e., $(a(u), c_A(u), E(u), \gamma, \Theta)$ and, once $H$
has been chosen, the differential equation (\ref{eq:thirdorder}) governs their behaviour. Fixing the function $H$ fixes $L$ and $M$ but not $N$,
in which case the lhs of the differential equation is known completely. $N$ depends on $c_A(u)$, $E(u)$, $\gamma$ and $\Theta$. Note that
the lhs will only depend upon $\theta$ if the function $H$ depends on $\theta$ and that if $H$ is independent of $\theta$ then $c_A(u) = 0$.
In general, equation (\ref{eq:thirdorder}) will separate into equations for each of $u$, $r$ and $\theta$.
In fact, the maximum number of non-special CKV admitted by a pp-wave spacetime can be deduced from (\ref{eq:thirdorder}).
In order for the spacetime to admit at least one non-special CKV the condition $a'''(u) \ne 0$ must be satisfied.
For one particular choice of $H$ and the parameters $(c_A(u), E(u), \gamma, \Theta)$ one can have at most three independent
non-special CKV. In such a case one might ask if there can be any further independent non-special CKV.
The answer is in the negative because the parameters $(a(u), c_A(u), E(u), \gamma, \Theta)$ appear {\it linearly} in the CKV $Y$,
that is, if there were any further such CKV one could find a linear combination of all these CKV providing a new set
of parameters $({\bar c}_A(u), {\bar E}(u), {\bar \gamma}, {\bar \Theta})$ for which the corresponding third order ordinary
differential equation for ${\bar a(u)}$ would necessarily have more than three independent solutions.

\begin{thm} \label{propthree}
A type $N$ pp-wave spacetime can admit at most three non-special CKV.
\end{thm}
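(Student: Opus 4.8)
The plan is to reduce the whole counting problem to a statement about a single function of $u$. Equation (\ref{eq:psiab2}) shows that in a type $N$ pp-wave the entire tensor $\psi_{;ab}$ equals $a'''(u)\,k_ak_b/2$, so a CKV is special exactly when $a'''(u)=0$. Consequently the map $Y\mapsto a_Y'''$ is linear on $\mathcal{C}$ with kernel precisely $\mathcal{S}$, and its image is isomorphic to $\mathcal{C}/\mathcal{S}$. The "number of non-special CKV'' is therefore $\dim(\mathcal{C}/\mathcal{S})=\dim\{a_Y''' : Y\in\mathcal{C}\}$, and the theorem is the assertion that this dimension is at most three.

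First I would record the linearity that makes this work. By (\ref{eq:refinedCKV}) every CKV is an \emph{affine-linear} expression in the data $(a(u),c_A(u),E(u),\gamma,\Theta)$; in particular $Y\mapsto a_Y$ is linear, and any CKV with $a_Y\equiv0$ automatically lies in $\mathcal{S}$. Thus the non-special content of the conformal algebra is faithfully carried by the space $\mathcal{A}=\{a_Y : Y\in\mathcal{C}\}$ of admissible profile functions, and it suffices to bound $\dim\{a''' : a\in\mathcal{A}\}$.

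Next I would invoke equation (\ref{eq:thirdorder}): once $H$ is fixed, $L$ and $M$ are fixed and every $a\in\mathcal{A}$ satisfies a third-order linear equation $a'''+L\,a'+M\,a=N$, in which the inhomogeneity $N$ is itself a linear functional of the remaining data $(c_A,E,\gamma,\Theta)$. For a fixed choice of those data the solution set for $a$ is an affine space consisting of one particular solution plus the kernel of the homogeneous operator $a\mapsto a'''+L\,a'+M\,a$, which for a genuine third-order ordinary differential equation in $u$ is at most three-dimensional. Hence at most three independent profiles, and so at most three independent non-special CKV, arise from any one choice of the lower parameters. The linearity of Step~2 is what closes off the remaining possibility: were a fourth independent non-special CKV to exist, a suitable linear combination of the four would retain $a'''\neq0$ while collapsing the lower parameters to a single admissible set $(\bar c_A,\bar E,\bar\gamma,\bar\Theta)$, forcing more than three independent solutions of one third-order equation for $\bar a(u)$, a contradiction.

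The hard part will be making this last reduction watertight, i.e.\ showing that the freedom hidden in the inhomogeneity $N$ cannot manufacture independent non-special directions beyond the three supplied by the homogeneous kernel. The subtlety is that distinct non-special CKV may carry genuinely different lower parameters, so the naive "collapse to one inhomogeneity'' only forces a contradiction when those parameters span a small space. What I would actually have to exploit is the consistency of (\ref{eq:thirdorder}) as an identity in the \emph{separate} variables $u$, $r$ and $\theta$: since $a'''(u)$ is independent of $r,\theta$, the $r,\theta$-dependence of $L\,a'+M\,a-N$ must cancel, and I expect this to pin the non-special part of $a$ (modulo $\mathcal{S}$) into the three-dimensional kernel of the reduced $u$-operator. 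Granting that the separation constraints force every profile with nonzero lower parameters to be congruent mod $\mathcal{S}$ to a homogeneous solution, one obtains $\dim\{a''' : a\in\mathcal{A}\}\le3$, and the theorem follows.
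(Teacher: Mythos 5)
Your argument is essentially the paper's own: the parameters $(a(u),c_A(u),E(u),\gamma,\Theta)$ enter $Y$ linearly, equation (\ref{eq:thirdorder}) is a third-order linear ODE for $a(u)$ whose homogeneous solution space is at most three-dimensional, and a putative fourth independent non-special CKV is ruled out by taking a linear combination that collapses the lower parameters and would force more than three independent solutions of a single third-order equation. The ``hard part'' you flag --- that the inhomogeneity $N$, which varies with the lower parameters, might in principle supply non-special directions beyond the three carried by the homogeneous kernel --- is exactly the step the paper itself disposes of with the same one-sentence linear-combination remark you give, and no more rigorously than you do; the paper does not carry out the $r,\theta$-separation you propose as the way to make this watertight (it only performs that separation later, and only in the special case where $L$, $M$, $N$ are functions of $u$ alone, to obtain (\ref{eq:thirdorder2})--(\ref{eq:sixteen})). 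Your instinct that the resolution lies in the fact that $a'''(u)$ is the unique $(r,\theta)$-independent term in the identity, so that the $(r,\theta)$-constant part of (\ref{eq:thirdorder}) yields a reduced third-order $u$-equation confining $a$, is the right one and is consistent with how the paper actually computes in every example. So the proposal matches the paper's proof in substance and in level of rigour; the conditional clause in your final paragraph is not filled in by the paper either.
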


Note that, if $L$, $M$ and $N$ are functions of $u$ only, equation (\ref{eq:thirdorder}) admits three independent solutions so
that the spacetime can admit the maximum number of three non-special CKV.
If $L$, $M$ and $N$ are not all functions of $u$ only, it is not clear in general whether it is possible for the spacetime
to admit three non-special CKV.
However, although not immediately apparent, there are some examples which for which $L$, $M$ and $N$ are functions of
$u$ and $r$ and admit three non-special CKV. We shall discuss these shortly.
To find solutions admitting the maximum number of non-special CKV we put
$L = \mu(u)$, $M = \nu(u)$ and $N = \eta(u)$. Equation (\ref{eq:thirdorder}) separates to give
\ba
a'''(u) + 8 \tau (u) a'(u) + 4 \tau'(u) a(u) = 0 \, , \label{eq:thirdorder2} \\
H = \tau (u) r^2 + \delta(\theta) r^{-2} \, , \label{eq:maxconditions1} \\
2 \, \delta(\theta) \, \Theta = \gamma \delta(\theta)_{, \theta} \, , \label{eq:fourteen} \\
\delta(\theta)_{, \theta} (c_2(u) \sin \theta - c_3(u) \cos \theta ) + 2 \delta(\theta) (c_2(u) \cos \theta + c_3(u) \sin \theta ) = 0 \, ,
\label{eq:fifteen}\\
c''_A(u) + 2 \tau (u) c_A(u) = 0 \, , \label{eq:sixteen}
\ea
and $E'(u) = 0$, $\eta(u) = 0$, where $4 \tau'(u) = \nu(u)$, and $\delta(\theta) \ne 0$, otherwise the spacetime is type $O$.
Four possible classes of solutions arise from equations (\ref{eq:maxconditions1}) - (\ref{eq:sixteen}), namely

\begin{tabbing}

{0000} \= {0000000000000000000000000000000000000000000000000000000000} \kill
$A$   \> $\delta (\theta) = l e^{2 m \theta}$, where $l \ne 0,m \ne 0$ constants, $\gamma m = \Theta$, $c_A = 0$. \\

$B$  \>  $\delta (\theta) = l (\sigma \sin \theta - \rho \cos \theta)^{-2}$, where $l \ne 0$, $\rho$ and $\sigma$ are constants (not both zero), \\
        \>  $\rho c_2 = \sigma c_3$, $c_A(u)$ satisfy (\ref{eq:sixteen}), $\Theta = \gamma = 0$.                       \\

$C$  \> $\delta (\theta)$ is constant, $c_A = \Theta = 0$. \\

$D$  \> $\delta (\theta)$ is arbitrary, $c_A = \Theta = \gamma = 0$.
\end{tabbing}

We note that in classes $A$, $C$ and $D$ above we can simply add an arbitrary function
of $u$ to $H$ in (\ref{eq:maxconditions1}), i.e., $H \mapsto H + f(u)$ to obtain examples where $L$, $M$ and $N$ are functions of
$u$ and $r$ and admit three non-special CKV. However, such a function of $u$ can always be transformed away by a suitable
coordinate transformation and we shall say no more about them here.

There are specializations of the above classes $A-D$ in which one or more of the CKV degenerates into a proper SCKV,
proper HKV or KV. The possible specializations are listed in table \ref{tab:maxtypes}. Consider the case in which such
a degeneration occurs. Then from equation
(\ref{eq:psiab2}), there must be at least one solution for which $a'''(u) = 0$, i.e.,
\be
a(u) = \rho u^2 + \alpha u + \beta \, ,
\label{eq:aofuforsckv}
\ee
where $\rho,\alpha, \beta$ are constants.
This gives rise to two possibilities. Either $\tau (u) = 0$, in which case $\rho, \alpha, \beta$ are arbitrary giving rise to a proper SCKV, proper HKV and KV,
or $\tau(u) \ne 0$, so that, from equation (\ref{eq:thirdorder2})
\be
\tau(u) = c (\rho u^2 + \alpha u + \beta)^{-2} \, ,
\label{eq:tauofuforsckv}
\ee
where $c$ is a non-zero constant. This implies that $\rho, \alpha, \beta$ are fixed constants, since they appear in the spacetime metric, so there is only one
SCKV which will be a proper SCKV if $\rho \ne 0$ or a HKV or KV (depending on the value of $\Theta$) if $\rho = 0$. The two remaining CKV will both be
non-special as can be seen by substituting for $\tau(u)$ given by (\ref{eq:tauofuforsckv}) into equation (\ref{eq:thirdorder2}), which becomes
\be
a'''(u) + 8c (\rho u^2 + \alpha u + \beta)^{-2} a'(u) - 8 c (2 \rho + \alpha)(\rho u^2 + \alpha u + \beta)^{-3} a(u) = 0 \, .
\ee
Apart from the solution given by (\ref{eq:aofuforsckv}), this equation admits two further solutions each of which satisfies $a'''(u) \ne 0$, namely
\ba
(\rho u^2 + \alpha u + \beta)^{-1} a(u) = \sin (\omega \Phi) \: \hbox{or} \: \sinh (\omega \Phi) \, , \nonumber\\
(\rho u^2 + \alpha u + \beta)^{-1} a(u) = \cos (\omega \Phi) \: \hbox{or} \: \cosh (\omega \Phi) \, ,
\ea
where $\Phi = \int (\rho u^2 + \alpha u + \beta)^{-1} du$ and the constant $\omega$ is related to the constants $\rho,\alpha, \beta, c$ by
\be
4 \rho \beta - \alpha^2 + 8 c = \pm \omega^2 \, .
\ee
The positive (negative) sign corresponds to the trigonometric (hyperbolic) functions.
\begin{table}
\caption{The specializations of the conformal symmetry classes $A$, $B$, $C$ and $D$. The entries correspond to the three CKV (\ref{eq:thirtysix}).}
\footnotesize\rm
\begin{tabular*}{\textwidth}{@{}l*{15}{@{\extracolsep{0pt plus12pt}}l}}
\br
& & & Non-special & & &\\
Type & $\tau(u)$ & $a'''(u)$ & CKV & pSCKV & pHKV &KV\\
\mr
General & $\ne 0$ & $\ne 0$ & $3$ & $0$  & $0$ & $0$  \\
(i) & $\ne 0$ & $\ne 0$ & $2$ & $1$  & $0$ & $0$  \\
(ii) & $\ne 0$ & $\ne 0$ & $2$ & $0$  & $1$ & $0$  \\
(iii) & $\ne 0$ & $\ne 0$ & $2$ & $0$  & $0$ & $1$  \\
(iv) & $=0$ & $=0$ & $0$ & $1$  & $1$ & $1$  \\
\br
\end{tabular*}
\label{tab:maxtypes}
\end{table}
Labelling the three solutions for $a(u)$ as $a_i(u)$, $i = 1,2,3$, the three corresponding CKV are given by
\be
Y_i = a_i(u) \partial_u + \case{1}{4} a''_i(u) r^2 \partial_v
+ \case{1}{2} a'_i(u) r \partial_r \, .
\label{eq:thirtysix}
\ee
The Lie brackets for these CKV are
\[
[ Y_i , Y_j ] = \epsilon^k_{ij} Y_k \, ,
\]
where $\epsilon^k_{ij}$ is the alternating symbol and $a_k = a_i a'_j - a_j a'_i$.
Apart from these CKV and the KV $k$, classes $A$, $B$ and $C$ above will admit additional proper SCKV, proper HKV or KV  corrsponding to the
non-zero values of $\gamma$, $\Theta$ and $c_A(u)$. However, class $D$ admits no additional symmetries.

As examples of the cases described in table \ref{tab:maxtypes}, consider the case when $a(u) = u^n$. If $n=2, 1$ or $0$ then $a'''(u) = 0$ and the
CKV given by equation (\ref{eq:thirtysix}) is, respectively, a proper SCKV, a proper HKV and a KV. For all values of $n$ other than $n=1$,
equation (\ref{eq:thirdorder2}) leads to
\be
\tau(u) = - \case{1}{8} n (n-2) u^{-2} + \omega u^{-2n}
\label{eq:october181145}
\ee
where $\omega$ is constant and the three solutions for $a(u)$ are
\ba
\fl a_i(u) = u^n, u^n \sin (q u^{1-n}), u^n \cos (q u^{1-n}) \qquad
&& \hbox{for} \: \: \omega = q^2 (1-n)^2 / 8 > 0 \label{eq:october33A} \\
\fl a_i(u) = u^n, u^n \sinh (q u^{1-n}), u^n \cosh (q u^{1-n}) \qquad
&&  \hbox{for} \: \: \omega = -q^2 (1-n)^2 / 8 < 0. \label{eq:october33B}
\ea
In general these $a_i(u)$ will lead to three non-special CKV. However, when $n=2$, i.e.,
\be
\tau(u) = \omega u^{-4}
\label{eq:october181116}
\ee
we have an example of case (i), i.e. the SCKV
\be
S = u^2 \partial_u + (r^2 / 2) \partial_v + u r \partial_r
\label{eq:october181137}
\ee
together with two non-special CKV given by equations (\ref{eq:thirtysix}) and (\ref{eq:october33A}) or (\ref{eq:october33B}) with $n=2$.

When $n=0$, i.e.,
\be
\tau(u) = \omega
\label{eq:october181157}
\ee
we have an example of case (iii), i.e., the KV
\be
X = \partial_u
\ee
together with two non-special CKV given by
\ba
C_1 & = & \sin (qu) \partial_u - \case{1}{4} q^2 r^2 \sin (qu) \partial_v + \case{1}{2} q r \cos (qu) \partial_r \, , \nonumber\\
C_2 & = & \cos (qu) \partial_u - \case{1}{4} q^2 r^2 \cos (qu) \partial_v - \case{1}{2} q r \sin (qu) \partial_r \, , \nonumber
\ea
if $\omega = q^2 / 8$, or
\ba
C_1 & = & \sinh (qu) \partial_u + \case{1}{4} q^2 r^2 \sinh (qu) \partial_v + \case{1}{2} q r \cosh (qu) \partial_r \, , \nonumber\\
C_2 & = & \cosh (qu) \partial_u + \case{1}{4} q^2 r^2 \cosh (qu) \partial_v + \case{1}{2} q r \sinh (qu) \partial_r \, , \nonumber
\ea
if $\omega = -q^2 / 8$.

When $n = 1$ equation (\ref{eq:thirdorder2}) yields
\be
\tau(u) = \omega u^{-2}
\label{eq:october181102}
\ee
where $\omega$ is a constant. There are three possibilities:
\begin{tabbing}

{00000000} \= {0000000000000000000000000000000000000000000000000000000000} \kill
$(a)$   \>  $\omega < 1/8 $, i.e., $8 \omega = 1-l^2$, $a(u) = u ( \beta + \zeta u^l + \sigma u^{-l})$. \\

$(b)$  \>  $\omega = 1/8 $, $a(u) = u ( \beta + \zeta \ln |u| + \sigma (\ln |u|)^2)$.  \\

$(c)$  \> $\omega > 1/8 $, i.e., $8 \omega = 1+l^2$, $a(u) = u ( \beta + \zeta \cos (l \ln |u|) + \sigma \sin (l \ln |u|))$.
\end{tabbing}
The quantities $l$, $\beta$, $\zeta$ and $\sigma$ are constants.
\begin{table}
\caption{The most general conformal symmetry classes with $L$, $M$, $N$ restricted to be functions of $u$ only, for $\tau(u) \ne 0$ and $\tau(u) = 0$. The last column gives the form of the function $F=H_{,yy} + H_{,zz}$ which appears in the Ricci tensor. Only class $Div$ can be vacuum (if $4 \delta(\theta) + \delta(\theta)_{, \theta \theta}=0$), all others are strictly non-vacuum solutions. Each of the spacetimes $A - D$ are conformally related to their counterpart spacetime $Aiv - Div$
respectively via the transformation (\ref{eq:conftransf}) given in section \ref{sec:typeNdual}.}
\footnotesize\rm
\begin{tabular*}{\textwidth}{@{}l*{15}{@{\extracolsep{0pt plus12pt}}l}}
\br
Class & Algebra & Orbits & Metric function $H$ & $F$\\

\mr

$A$ & ${\cal C}_5 \supset {\cal H}_2$ & $4$  & $\tau(u) r^2 + l e^{2 m \theta} r^{-2}$ & $4 \tau(u)+ 4 l (1+m^2) e^{2 m \theta} r^{-4}$ \\

$B$ & ${\cal C}_6 \supset {\cal G}_3$ & $4$  & $\tau(u) r^2 + l(\sigma z - \rho y)^{-2}$ & $4 \tau(u) + 6 \, l  (\sigma^2 + \rho^2) (\sigma z - \rho y)^{-4}$ \\

$C$ & ${\cal C}_5 \supset {\cal G}_2$ & $4$  & $\tau(u) r^2 + \delta r^{-2}$ & $4 \tau(u) + 4 \delta r^{-4}$\\

$D$ & ${\cal C}_4 \supset {\cal G}_1$ & $4$  & $\tau(u) r^2 + \delta(\theta) r^{-2}$ & $4 \tau(u) + (4 \delta(\theta) + \delta(\theta)_{, \theta \theta}) r^{-4}$\\

$Aiv$ & ${\cal S}_5 \supset {\cal H}_4$ & $4$  & $l e^{2 m \theta} r^{-2}$ & $4l e^{2 m \theta}(1 + m^2)r^{-4}$  \\

$Biv$ & ${\cal S}_6 \supset {\cal H}_5$ & $4$  & $l (\sigma z - \rho y)^{-2}$ & $6l (\sigma^2 + \rho^2) (\sigma z - \rho y)^{-4}$ \\

$Civ$ & ${\cal S}_5 \supset {\cal H}_4$ & $4$  & $\delta r^{-2}$ & $4 \delta r^{-4}$\\

$Div$ & ${\cal S}_4 \supset {\cal H}_3$ & $3$  & $\delta(\theta) r^{-2}$ & $(4 \delta(\theta) + \delta(\theta)_{, \theta \theta}) r^{-4}$\\

\br
\end{tabular*}
\label{tab:maxfunctionofu}
\end{table}

We now consider the conformal symmetries of classes $A$ - $D$. (See also table \ref{tab:maxfunctionofu}.)
\vspace{5mm}

{\bf Class A}.
$H = \tau(u) r^2 + l e^{2 m \theta} r^{-2}$. This is an example of an isometry class 1 solution.
Apart from the three CKV (\ref{eq:thirtysix}) and the KV $k$, the spacetime admits the HKV
\be
Z = 2 m v \partial_v + m r \partial_r  + 2 \partial_\theta \, ,
\label{eq:HKV2i}
\ee
and so the spacetime admits a ${\cal C}_5 \supset {\cal H}_2$.
As an example consider the case in which $\tau(u)$ is given by (\ref{eq:october181145}) ($n \ne 0,1,2$) for which the three CKV are all non-special.
The remaining Lie brackets are
\[
[ k, Y_i ] = [ Z, Y_i ] = 0 \, , \qquad [ k, Z] = 2 m Z \, .
\]

There are cases in which a second KV arises. First, if $\tau(u)$ is given by (\ref{eq:october181157}), there is the additional KV
$X_2 = \partial_u$, corresponding to isometry class 4 with $\epsilon = 0$, and the spacetime admits a ${\cal C}_5 \supset {\cal H}_3$.
Second, with (\ref{eq:october181102}) then we have the KV
\be
X_3 =  m (u \partial_u - v \partial_v) - \partial_\theta \, , \qquad [ k, X_3 ] = - m k \, .
\ee
The coordinate transformations $s = r \sin (\phi - \theta), t = r \cos (\phi - \theta)$ with $\phi = - ( \ln |u| )/m$ put the metric  function $H$ into the form
$H=u^{-2} [\omega (s^2 + t^2) + l (s^2 + t^2)^{-1} \exp(-2 m \arctan (s / t))]$, i.e., $H = u^{-2} W(s,t)$ so that the spacetime is of isometry class 3
and admits a ${\cal C}_5 \supset {\cal H}_3$.

{\bf Class B}.
$H = \tau(u) r^2 + l(\sigma \sin \theta - \rho \cos \theta)^{-2} r^{-2}$. This spacetime admits a ${\cal C}_6 \supset {\cal G}_3$
with basis consisting of the three CKV (\ref{eq:thirtysix}), the KV $k$ and the following two KVs.
If $f_I$, $I=1,2$, are two independent solutions of (\ref{eq:sixteen}) then the two KVs are
\be
\fl X_I = f'_I (\sigma \cos \theta + \rho \sin \theta) r \partial_v + f_I (\sigma \cos \theta + \rho \sin \theta) \partial_r
- r^{-1} f_I (\sigma \sin \theta - \rho \cos \theta) \partial_\theta \, ,
\label{eq:KVbasiscase2ii}
\ee
and we have the Lie brackets
\[
\fl [ X_1, X_2 ] = (\rho^2 + \sigma^2) \, m k \, , \qquad [ k, X_I ] = 0 \, , \qquad [ k, Y_i ] = 0 \, , \qquad [ X_I, Y_i ] = C^J_{Ii} \, X_J \, ,
\]
where $m=f_1 f'_2 - f_2 f'_1$ is a nonzero constant, $m = 0$ implies the two KV are not independent.
This is an example of an isometry class $1$ solution. However, it admits a ${\cal G}_3$ subalgebra which is distinct
from any of the ${\cal G}_3$ isometry classes of \cite{sippel86}. However, it is isomorphic to the class $1i$ algebra.

As an example consider the case (\ref{eq:october181116}) with $\omega = q^2 / 8$. The solutions for $a(u)$ are
\be
a_i(u) = u^2, u^2 \sin (q u^{-1}), u^2 \cos (q u^{-1})
\ee
and the solutions $f_I$, of equation (\ref{eq:thirdorder}) are
\[
f_1(u) = u \sin (q/ 2u) \, , \qquad f_2(u) = u \cos (q / 2u) \, .
\]
The CKV corresponding to $a_1(u)$ is the SCKV (\ref{eq:october181137}) while $a_2(u)$ and $a_3(u)$ correspond to non-special CKV.
The Lie brackets are
\[
\fl [ X_1, Y_1 ] = [ X_1, Y_3 ] = - [ X_2, Y_2 ] = \case{1}{2} q X_2 \, , \qquad
[ X_1, Y_2 ] = [ X_2, Y_3 ] = - [ X_2, Y_1 ] = \case{1}{2} q X_1 \, .
\]
Since this class of spacetime admits a ${\cal C}_6$ it must be conformally related to either $Biv$ or one of the plane wave spacetimes of
isometry class 10 in section \ref{sec:SandG}.

Note that if $\tau(u) = q$ is constant, one of the non-special CKV degenerates into the KV $X_3 = \partial_u$, so that the spacetime admits a
${\cal C}_6 \supset {\cal G}_4$. The Lie brackets are
\ba
{[}k, X_i ] = 0,  \qquad && [ X_1, X_2 ] = - (\rho^2 + \sigma^2) q k \, ,
\nonumber\\
{[} X_1, X_3 ] = - q X_2, \qquad && [ X_2, X_3 ] = q X_1 \, .
\nonumber
\ea

This is an example of a class 1 specialization admitting a ${\cal G}_3$ subalgebra which is distinct from any in \cite{sippel86}. However, this subalgebra
is isomorphic to isometry class $1i$.

{\bf Class C}.
$H = \tau(u) r^2 + \delta r^{-2}$, where $\delta$ is a constant. This spacetime admits a ${\cal C}_5 \supset {\cal G}_2$ and is an example of an isometry
class 2 solution. The three CKV are as in (\ref{eq:thirtysix}) and there is a ${\cal G}_2$ subalgebra given by the commuting KVs $k$
and $\partial_\theta$.

Consider the example (\ref{eq:october181102}). In each of the cases $(a)$, $(b)$ and $(c)$ the $\beta$ term corresponds to the HKV
\be
Z = 2 u \partial_u + r \partial_r \: ,
\ee
while the $\zeta$ and $\sigma$ terms correspond to non-special CKV (provided that in (a) $l \ne \pm 1$).
Thus the spacetime with $H = \delta r^{-2} + \omega u^{-2} r^2$ where $\delta$ and $\omega$ are non-zero constants,
admits a ${\cal C}_5$, a basis for which is provided by $k$, $X_2$, $Z$, non-special CKVs $C_1$, $C_2$ given by

\noindent \emph{Solution (a)}
\ba
C_1 & = & u^{l+1} \partial_u + \case{1}{4} (u^{l+1})'' r^2 \partial_v
+ \case{1}{2} (u^{l+1})' r \partial_r \: , \nonumber\\
C_2 & = & u^{-l+1} \partial_u + \case{1}{4} (u^{-l+1})'' r^2 \partial_v
+ \case{1}{2} (u^{-l+1})' r \partial_r \: , \nonumber
\ea
and Lie brackets
\ba
& & [ k, Z ] = [ k, C_1 ] = [ k, C_2 ] = 0 \, , \qquad [ X_2, Z ] = [ X_2, C_1 ] = [ X_2, C_2 ] = 0 \, , \nonumber\\
& & [ Z, C_1 ] = 2 l \, C_1  \, , \qquad [ Z, C_2 ] = -2 l \, C_2  \, , \qquad [ C_1, C_2 ] = - l \, Z  \, .  \nonumber
\ea

\noindent \emph{Solution (b)}
\ba
C_1 & = &  u \ln |u| \partial_u + \case{1}{4} ( u \ln |u|)'' r^2 \partial_v
+ \case{1}{2} ( u \ln |u|)' r \partial_r \: , \nonumber\\
C_2 & = & u (\ln |u|)^2 \partial_u + \case{1}{4} ( u (\ln |u|)^2)'' r^2 \partial_v
+ \case{1}{2} (u (\ln |u|)^2)' r \partial_r \: , \nonumber
\ea
and Lie brackets
\ba
& & [ k, Z ] = [ k, C_1 ] = [ k, C_2 ] = 0 \, , \qquad [ X_2, Z ] = [ X_2, C_1 ] = [ X_2, C_2 ] = 0 \, , \nonumber\\
& & [ Z, C_1 ] =  Z  \, , \qquad [ Z, C_2 ] = 2 C_1  \, , \qquad [ C_1, C_2 ] = C_2  \, .  \nonumber
\ea

\noindent \emph{Solution (c)}
\ba
C_1 & = &  u \cos \phi \partial_u + \case{1}{4} ( u \cos \phi)'' r^2 \partial_v
+ \case{1}{2} ( u \cos \phi)' r \partial_r \: , \nonumber\\
C_2 & = & u \sin \phi \partial_u + \case{1}{4} ( u \sin \phi)'' r^2 \partial_v
+ \case{1}{2} ( u \sin \phi)' r \partial_r \: , \nonumber
\ea
where $\phi = l \ln |u|$ and Lie brackets
\ba
& & [ k, Z ] = [ k, C_1 ] = [ k, C_2 ] = 0 \, , \qquad [ X_2, Z ] = [ X_2, C_1 ] = [ X_2, C_2 ] = 0 \, , \nonumber\\
& & [ Z, C_1 ] =  -2 l \, C_2  \, , \qquad [ Z, C_2 ] = 2 l \,  C_1  \, , \qquad 2 [ C_1, C_2 ] = l \, Z  \, .  \nonumber
\ea

Note that in the special case in which $\tau$ is constant in equation (\ref{eq:thirdorder2}), the spacetime admits a ${\cal C}_5 \supset {\cal G}_3$  and is
an example of an isometry class 6 solution which admits two proper CKV. This case will be discussed in the next section.

{\bf Class D}.
$H = \tau(u) r^2 + \delta(\theta) r^{-2}$, where $\delta(\theta)$ is an arbitrary function of $\theta$ which differs
from those appearing in classes $A$, $B$ and $C$. This spacetime admits only the KV $k$ and the three CKV given by
equation (\ref{eq:thirtysix}), one of which may degenerate into a SCKV depending on the function $\tau(u)$.
\vspace{1mm}

We now consider in detail the spacetimes of type $iv$ in Table \ref{tab:maxtypes}, i.e., those for which $\tau(u)=0$ so
that $H = \delta(\theta) r^{-2}$.
Spacetimes of this type only admit SCKV. In each of the four classes $Aiv-Div$ the spacetimes admit the
KV $k$ together with a second KV $X_2$, a proper HKV $Z$ and a proper SCKV $S$ given by
\be
X_2 = \partial_u \: , \qquad
Z = 2u \partial_u + r \partial_r \; , \qquad
S = u^2 \partial_u + {r^2 \over 2} \partial_v + u r \partial_r \; .
\ee
with Lie brackets
\[
\fl
[ k, X_2 ] = [ k, Z ] = [ k, S ] =  0 \, , \qquad
[ X_2 , Z ] = 2 X_2 \, , \qquad [ X_2, S ] = Z  \, , \qquad [ Z, S ] = 2 S \, .
\]
$X_2$, $Z$ and $S$ form a $so(2,1)$ subalgebra. For a class $Div$ spacetime, these are the only conformal symmetries
admitted, i.e., it admits a ${\cal S}_4 \supset {\cal H}_3 \supset {\cal G}_2$, but the other classes admit additional KVs.
Each of the spacetimes $A - D$ are conformally related to their counterpart spacetime $Aiv - Div$ respectively via the
transformation (\ref{eq:conftransf}) given in section \ref{sec:typeNdual}.
Further, each of the spacetimes $A - D$ will possess the $so(2,1)$ subalgebra structure formed by
$\{ X_2, Z, S \} \equiv \{ Y_i \}$, $i=1,2,3$.

{\bf Class Aiv}.
$H = l e^{2 m \theta} r^{-2}$. This is an example of an isometry class 7 solution. This spacetime admits the
${\cal S}_5 \supset {\cal H}_4 \supset {\cal G}_3$ consisting of  $k$, $X_2$, $Z$, $S$ and a third KV, $X_3$ given by
\[
X_3 = m \left(u\partial_u - v\partial_v \right) - \partial_\theta \: ,
\]
and the additional Lie brackets are
\[
\fl [X_3 , Z ] = 0 \, , \qquad [ X_2, X_3 ] = m X_2 \, , \qquad [ k, X_3 ] = - m k \, ,\qquad [ X_3, S ] = m S \, .
\]

{\bf Class Biv}.
$H = l (\sigma \sin \theta - \rho \cos \theta)^{-2} r^{-2} = l (\sigma z - \rho y)^{-2}$. Equation (\ref{eq:sixteen}) gives
\[
c_2(u) = \sigma(mu + n) \, , \qquad c_3(u) = \rho(mu + n) \, ,
\]
where $m$ and $n$ are constants.
The spacetime admits the ${\cal S}_6 \supset {\cal H}_5 \supset {\cal G}_4$ consisting of the $k$, $X_2$, $Z$ $S$,
and two additional KVs given by
\ba
\fl X_3  =  (\sigma \cos \theta + \rho \sin \theta ) \partial_r  - r^{-1} (\sigma \sin \theta - \rho \cos \theta ) \partial_\theta \: , \nonumber\\
\fl X_4  =  r (\sigma \cos \theta + \rho \sin \theta ) \partial_v +  u (\sigma \cos \theta + \rho \sin \theta ) \partial_r
- u r^{-1} (\sigma \sin \theta - \rho \cos \theta ) \partial_\theta \: .
\label{eq:basiscase1ii}
\ea
The additional Lie brackets are
\ba
\fl [ k, X_3 ] = [ k, X_4 ] = [X_2 , X_3 ] = [X_3 , Z ] = [X_4 , S ] = 0 \, , \nonumber\\
\fl [ X_2 , X_4 ] = X_3 \, , \qquad [ X_3, X_4 ] = (\rho^2 + \sigma^2) \, k \, , \qquad
[ X_3, S ] = X_4 \, , \qquad [ X_4, Z ] = - X_4  \, .  \nonumber
\ea
This spacetime is a class $8(\epsilon = 0)$ specialization and is conformally
related to the type $N$ with a ${\cal G}_6$ in equation (\ref{eq:defriseG6}).

{\bf Class Civ}.
$H = \delta r^{-2}$, where $\delta$ is a constant. This is an example of an isometry class  6 solution.
This spacetime admits the ${\cal S}_5 \supset {\cal H}_4 \supset {\cal G}_3$ consisting of $k$, $X_2$, $Z$, $S$ and the KV
$X_3 = \partial_\theta$. The additional Lie brackets are
\[
[ k, X_3 ] = [ X_2, X_3 ] = [ X_3, Z ] = [ X_3, S ] = 0\, .
\]

\section{Conformal symmetries for the type $N$ isometry classes}\label{sec:SandG}

We now consider each of the isometry classes of Sippel and Goenner \cite{sippel86} in order to determine whether
they admit any further conformal symmetries. We have obtained most of the following without specializing to
vacuum. However, we have found that specializing to vacuum only provides a few additional metrics, namely isometry class 2 specializations.
The results are summarized in tables \ref{tab:sippel} and \ref{tab:planewaves}.
\begin{table}
\caption{The conformal symmetry classes corresponding to the Sippel and Goenner isometry classes. Specializations which already appear in the previous tables are not listed here.}
\footnotesize\rm
\begin{tabular*}{\textwidth}{@{}l*{15}{@{\extracolsep{0pt plus12pt}}l}}
\br

Class & Algebra & Orbits & Metric function $H$ &  $F$\\ \hline

$1$  & ${\cal G}_1$ & $1n$ &  Arbitrary & $H_{,yy} + H_{,zz}$ \\

$1i$  & ${\cal G}_3$ & $3n$ & $H(u,z)$  & $H_{,zz}$ \\

$2$  & ${\cal G}_2$ & $2n$ & $H(u,r)$ & $H_{,rr} +r^{-1} H_{,r}$ \\

$2i$  & ${\cal H}_3$ & $3t$ & $K (\alpha u + \beta)^q \ln |r|$ & $0$ \\

$2ii$  & ${\cal H}_3$ & $3t$ & $K e^{[- \Theta u / \beta ]} \ln |r|$ & $0$ \\

$2iii$  & ${\cal S}_3 \supset {\cal G}_2$ & $3t$ & $e^{g(u)} \ln |r|$, $g(u)$ given by (\ref{eq:class2functiong}) & $0$ \\

$2iv$  & ${\cal C}_3 \supset {\cal G}_2$ & $3t$ & $H$ satisfies $(\ref{eq:Hsatisfycase2})$ with $a'''(u) \ne 0$ & $H_{,rr} +r^{-1} H_{,r}$ \\

$3$  & ${\cal G}_2$ & $2t$ & $u^{-2}W(s,t)$ & $u^{-2} (W_{,ss} + W_{,tt})$ \\

$4$  & ${\cal G}_2$ & $2t$ & $W(s,t)$ & $W_{,ss} + W_{,tt}$ \\

$5$  & ${\cal G}_3$ & $3t$ & $u^{-2}W(r)$ & $ u^{-2} (W_{,rr} +r^{-1} W_{,r})$\\

$5i$ & ${\cal S}_4 \supset {\cal G}_3$  & $4$ & $u^{-2} \zeta \ln |r|$ & $0$ \\

$5ii$ & ${\cal C}_4 \supset {\cal G}_3$  & $4$ & $u^{-2} (\delta r^{-\sigma} - \sigma (2 - \sigma)^{-2} r^2)$ &
$u^{-2}(\delta \sigma^2 r^{-(\sigma + 2)} - 4 \sigma (2 - \sigma)^{-2})$ \\

$6$  & ${\cal G}_3$ & $3t$ & $W(r)$ & $W_{,rr} +r^{-1} W_{,r}$ \\

$6i$ & ${\cal C}_5 \supset {\cal G}_3$ & $4$ & $\nu r^2 / 4 + \delta r^{-2}$ & $\nu + 4 \delta r^{-4}$\\

$6ii$ & ${\cal S}_5 \supset {\cal G}_3$ & $4$ & $\delta r^{-2}$ & $4 \delta r^{-4}$\\

$6iii$ & ${\cal H}_4$ & $4$ & $\zeta \ln |r|$ & $0$\\

$6iv$ & ${\cal H}_4$ & $4$ & $\delta r^{-\sigma}$ & $\delta \sigma^2 r^{-(\sigma + 2)}$\\

$7$  & ${\cal G}_3$ & $3t$ & $e^{2 c \theta} \, W(r)$ & $e^{2 c \theta}(W_{,rr} + r^{-1} W_{,r} + 4 c^2 r^{-2} W)$ \\

$7i$ & ${\cal H}_4$ & $4$ & $\delta r^{-\sigma} e^{2 c \theta}$, $\sigma \ne 2$ &
$\delta e^{2 c \theta} (\sigma^2 + 4 c^2) r^{-(\sigma + 2)}$\\

$7ii$ & ${\cal S}_5 \supset {\cal H}_4$ & $4$ & $\delta r^{-2} e^{2 c \theta}$ & $4 \delta e^{2 c \theta} (1 + c^2) r^{-4}$\\

$8$  & ${\cal G}_3$ & $3t$ & $e^{2t} \, W(s)$ & $[(\eta^2 + \sigma^2) W_{,ss} + 4 \epsilon^2 (\eta^2 + \sigma^2)^{-1}W ] e^{2t}$ \\

$8(\epsilon = 0)$  & ${\cal G}_4$ & $4$ & $W(s)$ & $(\eta^2 + \sigma^2) W_{,ss}$ \\

$8(\epsilon = 0)i$  & ${\cal H}_5$ & $4$ & $Ks^c, c \ne -2$ & $(\eta^2 + \sigma^2) \, c(c-1) Ks^{c-2}$ \\

$9$  & ${\cal G}_5$ & $4$ & $K e^{2(\eta y - \sigma z)}$ & $4 (\eta^2 + \sigma^2)H$ \\

\br
\end{tabular*}
\label{tab:sippel}
\end{table}
\subsection{Isometry class 1}
This is the general case which admits only the KV $k$. However, there are specializations admitting additional conformal symmetries,
e.g., classes $A$ and $D$ belong to isometry class 1.
Further, in section \ref{sec:typeNdual} we present examples of this class admitting ${\cal H}_2$, ${\cal S}_2$ and ${\cal S}_3$.
In the vacuum case there can be at most an ${\cal S}_3$, which has at most an ${\cal H}_2$ subalgebra.

\subsection{Isometry class 1$i$}

In this class $H$ is independent of one of the spatial coordinates, i.e., $H = H(u,z)$.
The condition (\ref{eq:Hsatisfy}), and exclusion of type $O$, gives $\gamma = c''_y(u) = a'''(u) = 0$ and
\ba
H_{,u} (\rho u^2 + \alpha u + \beta) + H_{,z} [(2 \rho u + \alpha - \Theta) z / 2 + c_z(u)]
\nonumber\\
+ c''_z(u) z + E'(u) + (2 \rho u + \alpha + \Theta) H = 0 \, ,
\nonumber
\ea
where $\rho$, $\alpha$ and $\beta$ are arbitrary constants. If $H(u, z)$ is an arbitrary function then it follows immediately from the above that
the only CKV which can occur are $k$ and
\be
X_2 = \partial_y \, , \qquad X_3 = y \partial_v + u \partial_y \, .
\ee
Thus we have a ${\cal G}_3$ with Lie brackets
\[
[ k, X_2 ] = [ k, X_3 ] = 0 \, , \qquad [ X_2, X_3 ] = k \, .
\]

\subsection{Isometry class 2}
In this case the function $H$ has the form $H = H(u,r)$.
The condition (\ref{eq:Hsatisfy}), and exclusion of type $O$, gives
$c_A(u)=0$ and
\be
\fl H_{,u} \, a(u) + r H_{,r} (a'(u) - \Theta) / 2 + a'''(u) r^2 /4 + E'(u)
= - (a'(u) + \Theta) H \: .
\label{eq:Hsatisfycase2}
\ee
In this case the CKV (\ref{eq:refinedCKVpolars}) has the form
\be
Y = a(u) \partial_u + (-\Theta v + a''(u) r^2 / 4 + E(u)) \partial_v
+ \psi r \partial_r - \gamma \partial_\theta \: .
\label{eq:ckvcase2}
\ee
If $H$ is an arbitrary function then it follows immediately from (\ref{eq:Hsatisfycase2}) that the only CKV which occur
are the two KV already listed in \cite{sippel86}, i.e., $k$ and
\be
X_2 = \partial_\theta \: , \qquad [ k , X_2 ] = 0 \: .
\label{eq:KVsforcase2}
\ee
Since $X_2$ is a KV we may put $\gamma=0$.

The partial differential equation is too difficult to solve in general. This class of pp-wave spacetime has axial symmetry and Barnes \cite{barnes01} has
determined restrictions on the structure constants of two-, three- and four-dimensional Lie algebras possessing a
one-dimensional cyclic Lie subalgebra. However, we have seen that there are specializations, given in class $C$ of section \ref{sec:typeN}, which admit
two or three proper CKV and further specializations admitting one non-special CKV, one SCKV, or one HKV are found in section \ref{sec:typeNdual}.

Solutions of the form $H = m(u) \, r^p + n(u) \, r^q$
can be found, where $p \ne q$ are constants. Substituting into equation (\ref{eq:Hsatisfycase2}) gives
\ba
(m'(u) r^p + n'(u) r^q) a(u) + (p m(u) r^p + q n(u) r^q) (a'(u) - \Theta)/2 \nonumber\\
+ a'''(u) r^2 /4 + E'(u) + (a'(u) + \Theta)(m(u) r^p + n(u) r^q) = 0 \: . \nonumber
\ea
One of $p,q$ must equal $2$ in order to avoid $a'''(u) = 0$; choose $q=2$ and assume $p \ne 0$ and $p \ne 2$ (otherwise type $O$). Then
equating coefficients of powers of $r$ we have  $E'(u) = 0$ and
\ba
& & a(u) m'(u) + (1 + p/2) m(u) a'(u) + (1 - p/2) m(u) \Theta = 0 \: , \label{eq:briansA1}\\
& & a'''(u)/4 + 2 n(u) a'(u) + n'(u) a(u) = 0 \: . \label{eq:briansA2}
\ea
If $\Theta=0$ and $p=-2$ then (\ref{eq:briansA1}) implies $m$ is constant and equation (\ref{eq:briansA2}) then yields three independent
solutions for $a(u)$ all of which may be proper CKV. This is class $C$ considered earlier. However, for arbitrary values of
$p$ and $\Theta$ and for fixed non-zero functions $m(u)$, $n(u)$, equation (\ref{eq:briansA1}) shows that there exists at most one
solution for $a(u)$ which satisfies both equations. If such an $a(u)$ exists there will be one additional conformal symmetry only, which
will be a non-special proper CKV if $a'''(u) \ne 0$. By eliminating $a(u)$ from equations (\ref{eq:briansA1}) and (\ref{eq:briansA2}) we arrive at a
difficult differential equation connecting $m(u)$ and $n(u)$ which gives the condition to be satisfied in order for the
spacetime to admit the additional conformal symmetry. As an example consider the case in which $n(u) = (1 - l^2)/8u^2$ where $l$ is
a constant. From solution $C$, case (a), equation (\ref{eq:briansA2}) leads to three independent solutions for
$a(u)$, namely $a(u)=u$, $a(u)= u^{1+l}$, $A(u) = u^{1-l}$. We consider these in turn:

$a(u)=u$, $m(u) = \sigma u^{[(p - 2) \Theta - (p+2)]/2}$, $\sigma$ is a constant and HKV
\[
Z = u \partial_u - \Theta v \partial_v + \psi r \partial_r \, , \qquad \psi = (1 - \Theta)/2 \, .
\]

$a(u)= u^{1+l}$, $m(u) = \sigma u^{-(p+2)(1+l)/2} \exp[(2 - p) \Theta (1+l)^{-1} u^{-l} / 2 ]$, $\sigma$ is a constant and proper CKV
\[
\fl C = u^{1+l} \partial_u +(l(1+l) u^{l-1} r^2 /4 - \Theta v) \partial_v + \psi r \partial_r \, ,
\qquad \psi = [(1 + l)u^{l} - \Theta]/2  \, .
\]

$a(u) = u^{1-l}$, $m(u) = \sigma u^{-(p+2)(1-l)/2} \exp[(2 - p) \Theta (1-l)^{-1} u^l / 2 ]$, $\sigma$ is a constant and proper CKV
\[
\fl C = u^{1-l} \partial_u +(l(l-1) u^{-(l+1)} r^2 /4 - \Theta v) \partial_v + \psi r \partial_r \, ,
\qquad \psi = [(1 - l)u^{-l} - \Theta]/2  \, .
\]
In all three cases the Lie brackets involving the extra CKV Y are
\[
[ k, Y ] = -\Theta k \, , \qquad [X_2, Y] = 0 \, .
\]

Let us consider the specialization to vacuum, in which case the metric function can be put in the form $H = e^{g(u)} \ln |r|$.
Only SCKV can occur in a vacuum spacetime. In general there will be no further conformal symmetry. However, we can obtain
SCKV of the form (\ref{eq:generalSCKV}) with the choice
\be
g'(u) = - ( \Theta + \alpha + 2 \rho u) (\rho u^2 + \alpha u + \beta)^{-1} \, , \qquad E'(u) = - \psi \, e^{g(u)} \, .
\ee
The function $g(u)$ is given by
\be
g(u) = - \ln |\rho u^2 + \alpha u + \beta| - \Theta \int (\rho u^2 + \alpha u + \beta)^{-1} du \, .
\label{eq:class2functiong}
\ee
In general there will be only one extra such (not necessarily proper) SCKV.
There are only two possibilities with an additional KV, $H=K \ln |r|$ and $H = K(\Theta u + \beta)^{-2} \ln |r|$ (the constant $\beta$ can be transformed away),
and in both cases they admit further CKV, i.e.,
${\cal H}_4 \supset {\cal G}_3$ (isometry class 6 specialization) and ${\cal S}_4 \supset {\cal G}_3$ (isometry class 5 specialization) respectively, and these
spacetimes are conformally related to one another, see section \ref{sec:typeNdual}. These arise because some of the parameters $\rho$, $\alpha$, $\beta$ and
$\Theta$ can be chosen in such a way as they do not appear in the function $g(u)$ and so are arbitrary constants.

There are three possibilities for proper HKV, $\rho = 0$, $\alpha \ne \Theta$ and
\newline
(i) ${\cal H}_3 \supset {\cal G}_2$, $\alpha = 0$, $\beta \ne 0$, $H = K e^{[- \Theta u / \beta ]} \ln |r|$, with HKV of the form
\[
\fl Z = \beta \partial_u - (\Theta v + \beta K e^{[- \Theta u / \beta ]} / 2) \, \partial_v - (\Theta / 2) \, r \partial_r \, , \qquad
[ Z, k ] = \Theta k \, , \qquad [ Z, X_2 ] = 0 \, .
\]
(ii) ${\cal H}_3 \supset {\cal G}_2$, $\alpha \ne 0$, $\beta$ arbitrary, $\alpha + \Theta \ne 0$, $H=K (\alpha u + \beta)^q \ln |r|$, $q = -(\alpha + \Theta) / \alpha$.
For the case with $\Theta = 0$ the HKV is of the form
\[
\fl Z = (\alpha u + \beta) \partial_u - (K / 2) \ln|(\alpha u + \beta)| \, \partial_v - (\alpha / 2) \, r \partial_r \,  , \qquad
[ Z, k ] = 0 \, , \qquad [ Z, X_2 ] = 0 \, ,
\]
and for the case with $\Theta \ne 0$ the HKV is of the form
\ba
\fl Z  =  (\alpha u + \beta) \partial_u +
(- \Theta v + K (\alpha / 2 \Theta)(\alpha - \Theta) (\alpha u + \beta)^{-\Theta / \alpha}) \, \partial_v
- ((\alpha - \Theta)/2) \, r \partial_r \,  , \nonumber\\
 {[ Z, k ]} = \Theta k \, , \qquad {[ Z, X_2 ]} = 0 \, .
\ea
(iii) ${\cal H}_4 \supset {\cal G}_3$, $\alpha \ne 0$, $\beta$ arbitrary, $\alpha + \Theta = 0$, $H=K \ln |r|$, discussed above.

Regarding proper SCKV, if one makes the choice $\alpha = \Theta$ and $\beta = 0$ then $H = K u^{-2} \ln |r|$ and we have the ${\cal S}_4 \supset {\cal G}_3$ discussed above.
Otherwise there is only ${\cal S}_3 \supset {\cal G}_2$.
There are no ${\cal S}_4 \supset {\cal H}_3 \supset {\cal G}_2$ since they would necessarily be specializations of
the ${\cal H}_3 \supset {\cal G}_2$ which admit no further CKV.

\subsection{Isometry class 3}
In this case the function $H$ has the form $H = u^{-2} W(s,t)$,
where $s = y \sin \phi - z \cos \phi$, $t = y \cos \phi + z \sin \phi$, $\phi = \epsilon \ln |u|$ and $\epsilon$ is an arbitrary constant.
We shall relabel $c_y(u)$, $c_z(u)$ as $c_2(u)$, $c_3(u)$, respectively.
The general CKV takes the form
\ba
Y^u & = & a(u) \: , \nonumber\\
Y^v & = & -\Theta v + a''(u) \, (s^2 + t^2)/4 +
c'_2(u) (s \sin \phi + t \cos \phi)
\nonumber\\
& & + c'_3(u) (t \sin \phi - s \cos \phi)
+ E(u) \: , \nonumber\\
Y^s & = & \psi s + a(u) \phi' t + \gamma t + c_2(u) \sin \phi - c_3(u) \cos \phi \: , \nonumber\\
Y^t & = & \psi t - a(u) \phi' s - \gamma s + c_2(u) \cos \phi + c_3(u) \sin \phi \: .
\label{eq:ckvcase3}
\ea
The condition (\ref{eq:Hsatisfy}) takes the form
\ba
& & H_{,s} [a(u) \phi' t + \psi s + \gamma t + c_2(u) \sin \phi - c_3(u) \cos \phi] \nonumber\\
& &  + H_{,t} [-a(u) \phi' s + \psi t - \gamma s + c_2(u) \cos \phi + c_3(u) \sin \phi] \nonumber\\
& &  + c''_2(u) (s \sin \phi + t \cos \phi) + c''_3(u) (t \sin \phi - s \cos \phi)  \nonumber\\
& &  + a'''(u) (s^2 + t^2) / 4  +  H' \, a(u) + (a'(u) + \Theta) H + E'(u) = 0 \: .
\label{eq:class3condition}
\ea
If $W$ is an arbitrary function then it follows that the only CKV which occur are the two KV already listed in \cite{sippel86}, i.e., $k$ and
\be
X_2 =   \epsilon (y \partial_z - z \partial_y)  +  (u \partial_u - v \partial_v) \, , \qquad [ k , X_2 ] = - k \: .
\label{eq:KVsforcase3}
\ee
There are at least two class 3 spacetimes which admit further conformal symmetry, class $Aii$ with ${\cal C}_5 \supset {\cal H}_3$ and
dual spacetime $D7i$ with ${\cal S}_4 \supset {\cal H}_3$, see section \ref{sec:typeNdual}.

\subsection{Isometry class 4}
In this case the function $H$ has the form $H = W(s,t)$,
where $\phi= \epsilon u$, $\epsilon$ is an arbitrary constant and the general CKV takes the form (\ref{eq:ckvcase3}).
The condition (\ref{eq:Hsatisfy}) takes the form (\ref{eq:class3condition}).
If $W$ is an arbitrary function then it follows that the only CKV which occur are the two KV already listed in \cite{sippel86}, i.e.,
$k$ and
\be
X_2 =  \epsilon (y \partial_z - z \partial_y) +  \partial_u \: , \qquad [ k , X_2 ] = 0 \: .
\label{eq:KVsforcase4}
\ee
There is at least one class 4 spacetime which admits further conformal symmetry, class $Aiii$ with ${\cal C}_5 \supset {\cal H}_3$, which
corresponds to $\epsilon = 0$, i.e., $W(s,t) = W(-z, y)$ and $X_2 = \partial_u$.

\subsection{Isometry class 5}
In this case the function $H$ has the form $H = u^{-2} W(r)$.
The condition (\ref{eq:Hsatisfy}), and exclusion of type $O$, gives $c_A(u)=0$ and
\ba
-2u^{-3} W a(u) + r W_{,r} u^{-2} (a'(u) - \Theta) /2 + a'''(u)r^2 /4 + E'(u)
\nonumber\\
= -(a'(u) + \Theta) W u^{-2} \: ,
\label{eq:Hsatisfycase5}
\ea
and the general CKV takes the form (\ref{eq:ckvcase2}). If $W$ is an arbitrary function then it follows immediately
from (\ref{eq:Hsatisfycase5}) that the only CKV which occur are the three KV already listed in \cite{sippel86}, i.e., $k$ and
\be
X_2 = u\partial_u - v\partial_v  \: , \qquad
X_3 = \partial_\theta \: ,
\label{eq:KVsforcase5}
\ee
with Lie brackets
\[
[ k , X_2 ] = -k  \: , \qquad [ k , X_3 ] = 0 \: , \qquad [ X_2, X_3 ] = 0 \: .
\]

Let us now consider restrictions on the functional form of $W(r)$ which will give rise to additional CKV.
The differential equation (\ref{eq:Hsatisfycase5}) can be written in the form
\be
r W(r)_{,r} + f(u) W(r) = g(u) + h(u) r^2 \: .
\label{eq:Hsatisfycase567}
\ee
Differentiating (\ref{eq:Hsatisfycase567}) with respect to $u$ gives
\be
f'(u) W(r) = g'(u) + h'(u) r^2
\label{eq:Hsatisfycase567deru}
\ee
which, if $f'(u) \ne 0$, implies that $W_{,yy} = W_{,zz}$, $W_{,yz}=0$, i.e., type $O$. Hence, we must have $f$, $g$ and $h$ constant
to avoid type $O$. Put $f=\sigma$, $g=\zeta$, $h=\nu$, then equation (\ref{eq:Hsatisfycase567}) becomes
\[
(r^\sigma W)_{, r} = \zeta \, r^{\sigma - 1} + \nu \, r^{\sigma + 1}
\]
which integrates to give
\ba
\sigma \ne -2, 0 : \qquad & W(r) = \zeta/\sigma + \nu r^2 /(\sigma + 2) + \delta r^{-\sigma}
\nonumber\\
\sigma = -2 : \qquad & W(r) = - (\zeta/ 2) + \nu r^2 \ln |r| + \delta r^2
\nonumber\\
\sigma = 0 : \qquad & W(r) = \zeta \ln |r| + \nu r^2/2 + \delta
\nonumber
\ea
where $\delta$ is a constant. Each of the expressions for $W(r)$ contains an additive constant, $q$, which can be
set to zero with a transformation of the form $v \mapsto v -q / u$. The above equations can then be replaced by
\ba
\sigma \ne -2, 0 : \qquad & W(r) = \nu r^2 /(\sigma + 2) + \delta r^{-\sigma}
\label{eq:Wsolncase567}
\\
\sigma = -2 :  & W(r) = \nu r^2 \ln |r| + \delta r^2
\label{eq:Wsolncase567sig-2} \\
\sigma = 0 : & W(r) = \zeta \ln |r| + \nu r^2/2
\label{eq:Wsolncase567sigzero}
\ea
The differential equation (\ref{eq:Hsatisfycase567}) gives
\ba
& & a'(u) -2 u^{-1} a(u) + \Theta = \sigma (a'(u) - \Theta)/2  \, , \label{eq:case5a} \\
& & E'(u) u^2 = - \zeta (a'(u) - \Theta)/2 \, , \label{eq:case5b} \\
& & a'''(u) = -2 u^{-2} \nu (a'(u) - \Theta) \, . \label{eq:case5c}
\ea
where in equation (\ref{eq:case5b}), the rhs is zero except in the case $\sigma = 0$.
Exclusion of type $O$ (i.e., $W(r) \propto r^2$) requires $\delta \ne 0$ in equation (\ref{eq:Wsolncase567}), $\nu \ne 0$ in equation
(\ref{eq:Wsolncase567sig-2}) and $\zeta \ne 0$ in equation (\ref{eq:Wsolncase567sigzero}).

If $\sigma=2$ then equations (\ref{eq:case5a}) and (\ref{eq:case5b}) imply that the only conformal symmetries are the three KVs $k$, $X_2$ and $X_3$.
If $\sigma = -2$ then (\ref{eq:case5a}) gives
\be
a(u) = l u,
\ee
where $l$ is a constant, and equation (\ref{eq:case5c}) gives $a'''(u) = 0 = -2 u^{-2} \nu (l - \Theta)$ and since $\nu \ne 0$, we have $a(u) = \Theta u$
and again the only conformal symmetries are the KVs above.
If $\sigma = 0$ then (\ref{eq:case5a}) integrates to give
\be
a(u) = \Theta u + l u^2,
\label{eq:5sigmazero}
\ee
where $l$ is a constant and equation (\ref{eq:case5b}) gives $E(u) = - \zeta l \ln |u| + m$ where $m$ is a constant. From equations
(\ref{eq:case5c}) and (\ref{eq:5sigmazero}) we find $a'''(u) = 0 = -4 l \nu / u$, i.e., $l \nu = 0$. If $l = 0$ then only the three KV above exist, but if
$l \ne 0$ and $\nu = 0$, i.e., $W(r) = \zeta \ln |r|$, then in addition to the three KVs we have the proper SCKV
\be
S = u^2 \partial_u + ( - \zeta \ln |u| + r^2/2)\partial_v
+ u \, r\partial_r \: ,
\label{eq:case5SCKV1}
\ee
with $\psi = u$. The Lie brackets for the ${\cal S}_4$ are
\[
[ k, S ] = [ X_3, S ] =0 \; , \qquad [ X_2, S ] = S - \zeta k \; .
\]

If $\sigma \ne -2, 0, 2$ then (\ref{eq:case5a}) integrates to give
\be
a(u) = \Theta u + l \, u^{4 / (2-\sigma)},
\label{eq:case5solnfora}
\ee
where $l$ is a constant. Substituting this into equation (\ref{eq:case5c}) gives $\nu = -(2 + \sigma) \sigma / (\sigma - 2)^2$, i.e.,
\be
W(r) = -\sigma (2 - \sigma)^{-2} r^2 + \delta r^{-\sigma} \, .
\ee
This spacetime admits a ${\cal C}_4$ consisting of $k$, $X_2$, $X_3$ and the proper CKV
\be
C = u^q \left(
\partial_u + (2 + \sigma)(2 - \sigma)^{-2} r^2 u^{-2} \partial_v
+ (q r / 2 u) \partial_r \right) \: ,
\label{eq:case5CKV1}
\ee
where $q= 4 / (2 -\sigma)$. The Lie brackets are
\[
[ k, C ] = [ X_3, C ] = 0 \; , \qquad [ X_2, C ] = (q - 1) C \; .
\]

In summary, the isometry class 5 spacetime of Petrov type $N$ with metric
\be
ds^2 = - 2 du dv - 2 \zeta u^{-2} \ln |r| \, du^2 + dy^2 + dz^2
\label{eq:summarymetric5i}
\ee
admits an  ${\cal S}_4 \supset  {\cal G}_3$. The spacetimes with metric
\be
ds^2 =  -2 du dv - 2 u^{-2} [\delta r^{- \sigma} - \sigma (2- \sigma)^{-2} r^2]  \, du^2 + dy^2 + dz^2,
\label{eq:summarymetric5ii}
\ee
where $\sigma \ne -2, 0, 2$ admit an ${\cal C}_4 \supset  {\cal G}_3$. All other isometry class 5 type $N$ spacetimes admit only the
${\cal G}_3$.

\subsection{Isometry class 6}
In this case the function $H$ has the form $H = W(r)$.
The condition (\ref{eq:Hsatisfy}), and exclusion of type $O$, again gives $c_A(u)=0$ and
\be
r W_{,r} (a'(u) - \Theta) /2 + a'''(u) r^2 /4 + E'(u) = -(a'(u) + \Theta) W \: ,
\label{eq:Hsatisfycase6}
\ee
and the general CKV takes the form (\ref{eq:ckvcase2}). If $W$ is an arbitrary function then it follows immediately from
(\ref{eq:Hsatisfycase6}) that the only CKV which occur are the three KV already listed in \cite{sippel86}, i.e., $k$ and
\be
X_2 = \partial_u \: , \qquad
X_3 = \partial_\theta \: ,
\label{eq:KVsforcase6}
\ee
with Lie brackets
\[
[ k , X_2 ] = 0  \: , \qquad [ k , X_3 ] = 0 \: , \qquad [ X_2, X_3 ] = 0 \: .
\]
The differential equation (\ref{eq:Hsatisfycase6}), written in the form (\ref{eq:Hsatisfycase567}), gives equations identical to (\ref{eq:Wsolncase567}) to
(\ref{eq:Wsolncase567sigzero}) and leads to
\ba
& & a'(u) + \Theta = \sigma (a'(u) - \Theta)/2  \, , \label{eq:case6a} \\
& & E'(u) = -\zeta (a'(u) - \Theta)/2 \, , \label{eq:case6b} \\
& & a'''(u) = -2 \nu (a'(u) - \Theta) \, , \label{eq:case6c}
\ea
where, as previously, $\zeta = 0$ in equation (\ref{eq:case6b}) except in the case where $\sigma = 0$.

If $\sigma = 2$, equations (\ref{eq:case6a}) - (\ref{eq:case6c}) imply that $\Theta = 0$, $E'(u) = 0$,
\be
a'''(u) + 2 \nu a'(u) = 0 \, ,
\ee
and $H(r) = W(r) = \nu r^2 / 4 + \delta r^{-2}$. This is the special case of the class $C$ solution of section \ref{sec:typeN} in which
$\tau(u) = \nu /4$ is constant. If $\nu > 0$, put $\nu = l^2/2$
and the differential equation integrates to give
\be
a(u) = a_0 + a_1 \cos lu + a_2 \sin lu \: ,
\ee
where $a_1$, $a_2$ and $a_0$ are constants.
In this case we have a ${\cal C}_5$, consisting of $k, X_2, X_3$ and the two proper CKVs $C_1$, $C_2$
\ba
C_1 & = & \cos lu \partial_u
- (l^2 r^2 / 4) \cos lu  \partial_v - l {r \over 2} \sin lu  \partial_r \:, \label{eq:case6CKV1}\\
C_2 & = & \sin lu \partial_u
- (l^2 r^2 / 4) \sin lu  \partial_v + l {r \over 2} \cos lu  \partial_r \:,
\label{eq:case6CKV2}
\ea
with Lie brackets
\ba
& & [ k, C_1 ] = [ k, C_2 ] = [ X_3, C_1 ] = [ X_3, C_2 ] = 0 \, , \nonumber\\
& & [ C_1, C_2 ] = l X_2 \; , \qquad [ X_2 , C_1 ] = -l \, C_2 \; , \qquad [ X_2, C_2 ] = l \, C_1 \: .
\ea
For $\nu < 0$, we put $\nu = -l^2/2$ and the differential equation integrates to give
\be
a(u) = a_0 + a_1 \cosh lu + a_2 \sinh lu \: ,
\ee
where $a_1$, $a_2$ and $a_0$ are constants.
In this case we have a ${\cal C}_5$, consisting of $k, X_2, X_3$ and the two proper CKV $C_1$, $C_2$
\ba
C_1 & = & \cosh lu \partial_u
+ (l^2 r^2 / 4) \cosh lu  \partial_v + l {r \over 2} \sinh lu  \partial_r \:, \label{eq:case6CKV3}\\
C_2 & = & \sinh lu \partial_u
+ (l^2 r^2 / 4) \sinh lu  \partial_v + l {r \over 2} \cosh lu  \partial_r \:, \label{eq:case6CKV4}
\ea
with Lie brackets
\ba
& & [ k, C_1 ] = [ k, C_2 ] = [ X_3, C_1 ] = [ X_3, C_2 ] = 0 \, , \nonumber\\
& & [ C_1, C_2 ] = l X_2  \; , \qquad [ X_2 , C_1 ] = l \, C_2 \; , \qquad [ X_2, C_2 ] = l \, C_1 \: .
\ea
If $\nu =0$, i.e., $W(r) = \delta r^{- 2}$, the spacetime is that of class $Civ$ of section \ref{sec:typeN} which admits the
${\cal S}_5 \supset {\cal H}_4 \supset {\cal G}_3$.

If $\sigma = -2$, equation (\ref{eq:case6a}) gives $a'(u) = \Theta = 0$ and so only KV are possible.
If $\sigma = 0$ equation (\ref{eq:case6a}) gives
\be
a(u) = - \Theta u + l
\ee
where $l$ is a constant. Equation (\ref{eq:case6c}) becomes $a'''(u) = 0 = 4 \nu \Theta$. If $\Theta = 0$ then only the three KV exist.
However, if $\nu = 0$ and $\Theta \ne 0$ then $W(r) = \zeta \ln |r|$, in which case the spacetime admits an ${\cal H}_4 \supset {\cal G}_3$ consisting
of $k$, $X_2$ and $X_3$ and the proper HKV
\be
Z = u \partial_u + (v - \zeta u) \partial_v + r \partial_r \: ,
\label{eq:case6HKV}
\ee
with $\psi = 1$ and Lie brackets
\[
[ k , Z ] = k  \: , \qquad [ X_2, Z ] =   X_2 - \zeta \, k \: , \qquad [ X_3, Z] = 0 \: .
\]

If $\sigma \ne -2, 0, 2$ then $W(r)$ is given by equation (\ref{eq:Wsolncase567}) and equation (\ref{eq:case6a}) integrates to give
\be
a(u) = (\sigma + 2) (\sigma - 2)^{-1} \Theta u + l \, ,
\ee
where $l$ is a constant. Equation (\ref{eq:case6c}) then leads to $\nu (a'(u) - \Theta) = 0$.
The case with $\nu \ne 0$ and $a'(u) = \Theta$ admits only the three
KVs and $\nu = 0$ leads to an ${\cal H}_4 \supset {\cal G}_3$ consisting of the three KVs and the proper HKV
\be
2 Z =  (\sigma + 2) u \partial_u + (2 - \sigma) v \partial_v + 2 r  \partial_r  \: ,
\label{eq:case6HKVlast}
\ee
with $\psi = 1$ and Lie brackets
\[
2 [ k , Z ] = (2 - \sigma) k  \: , \qquad 2 [ X_2, Z ] =  (\sigma + 2) X_2 \: , \qquad [ X_3, Z] = 0 \: .
\]

In summary, the isometry class 6 spacetimes of Petrov type $N$ with metric
\be
ds^2 = - 2 du dv - 2 (\nu r^2 / 4 + \delta r^{-2}) du^2 + dy^2 + dz^2
\label{eq:summarymetric6i}
\ee
with $\nu \ne 0$ admits a ${\cal C}_5 \supset {\cal G}_3$. If $\nu = 0$ the metric
\be
ds^2 = - 2 du dv - 2 \delta r^{-2} du^2 + dy^2 + dz^2
\label{eq:summarymetric6ii}
\ee
admits a ${\cal S}_5 \supset {\cal H}_4 \supset {\cal G}_3$. The spacetime with metric
\be
ds^2 = - 2 du dv - 2 \zeta \ln |r| \, du^2 + dy^2 + dz^2
\label{eq:summarymetric6iii}
\ee
admits an ${\cal H}_4 \supset {\cal G}_3$ and the class of spacetimes with metric
\be
ds^2 = - 2 du dv - 2 \delta r^{- \sigma} du^2 + dy^2 + dz^2
\label{eq:summarymetric6iv}
\ee
with $\sigma \ne -2, 0, 2$ admits an ${\cal H}_4 \supset {\cal G}_3$. All other isometry class 6 spacetimes admit only the ${\cal G}_3$.

\subsection{Isometry class 7}
In this case the function $H$ has the form $H = W(r) \exp (2 c \, \theta)$,
where $y = r \cos \theta$, $z = r \sin \theta$ and $c \ne 0$ is a constant.
Note that the coordinate transformation $s = r \sin (\phi - \theta), t = r \cos (\phi - \theta)$, $\phi = -c^{-1} \ln |u|$
gives $H = u^{-2} W(s^2 + t^2) \exp(-2 c \tan^{-1}(s / t))$,
i.e., the class 7 spacetimes are a special case of the class 3 spacetimes.
The condition (\ref{eq:Hsatisfy}) leads to the following
\be
r W_{,r}  (a'(u) - \Theta) /2 + (a'(u) + \Theta - 2 c \gamma) W = 0 \: ,
\label{eq:Hsatisfycase7}
\ee
and $E'(u) = a'''(u) = c_A(u) = 0$ and the general CKV takes the form (\ref{eq:ckvcase2}). Thus the spacetime will only admit
SCKV. If $W$ is an arbitrary function then it follows immediately from
(\ref{eq:Hsatisfycase7}) that the only CKV which occur are the three KV already listed in \cite{sippel86}, i.e., $k$ and
\be
X_2 = \partial_u \: , \qquad
X_3 = c \left(u\partial_u - v\partial_v \right) - \partial_\theta \: ,
\label{eq:KVsforcase7}
\ee
with Lie brackets
\[
[ k , X_2 ] = 0  \: , \qquad [ k , X_3 ] = -c k \: , \qquad [ X_2, X_3 ] = c X_2 \: .
\]

If $W(r)$ is not arbitrary, the differential equation (\ref{eq:Hsatisfycase7}) gives $W(r)=\delta r^{-\sigma}$, where $\delta$ is a non-zero constant, and if
$\sigma \ne 2$ then $a'(u) = \alpha$ is constant where $\sigma$ is given by $(2 - \sigma) \alpha + ( 2 + \sigma) \Theta = 4 c \gamma$.
This leads to an ${\cal H}_4$ with basis $k$, $X_2$, $X_3$ and the proper HKV
\be
Z = 2 u \partial_u +  r \partial_r
+ ((\sigma - 2)/ 2 c) \partial_\theta \, .
\label{eq:case7HKV}
\ee
Note that $\alpha \ne \Theta$ otherwise $Z$ degenerates into the KV $X_3$.
The Lie brackets are
\[
[ k , Z ] = [ X_3 , Z ] = 0 \; , \qquad [ X_2 , Z ] = 2 \, X_2 \; .
\]
When $\sigma = 2$ we find that $\Theta = c \gamma$ leading to an ${\cal S}_5$ with basis $k$, $X_2$, $X_3$, proper HKV $Z$
and proper SCKV $S$ given by
\ba
Z & = & 2u \partial_u + r \partial_r \; ,
\label{eq:case7HKV2}\\
S & = & u^2 \partial_u + {r^2 \over 2} \partial_v + u r \partial_r \; .
\label{eq:case7SCKV}
\ea
The Lie brackets are
\ba
& & [ k, Z ] = [ k, S ] = [ X_3, Z ] = 0 \; , \qquad [ X_3, S ] = c S \; ,
\nonumber\\
& & { [ Z, S ] = 2 S \; , \qquad [ X_2, Z ] = 2 X_2 \; , \qquad
[ X_2, S ] = Z  \; .}
\ea
Hence the isometry class 7 spacetimes with metric
\be
ds^2 = - 2 du dv - 2 \delta r^{- \sigma} \exp{(2 c \, \theta)} du^2 + dy^2 + dz^2
\label{eq:summarymetric7}
\ee
admits an ${\cal H}_4 \supset {\cal G}_3$ if $\sigma \ne 2$, or an ${\cal S}_5 \supset {\cal H}_4 \supset {\cal G}_3$ if $\sigma = 2$. All other isometry
class 7 spacetimes admit only the ${\cal G}_3$.

\subsection{Isometry class 8}
In this case the function $H$ has the form $H = W(s) \exp (2 \delta w)$,
where $s = \eta z - \sigma y$, $w = (\eta y + \sigma z)$ and $\delta = - \epsilon / (\eta^2 + \sigma^2)$, where $\eta$, $\sigma$ are constants such that
$\eta^2 + \sigma^2 \ne 0$, and $\epsilon$ is a non-zero constant. Sippel and Goenner use the notation $t = \delta w$.
Condition (\ref{eq:Hsatisfy}) leads to $E'(u) = \gamma = \alpha - \Theta = a'''(u) = c''_A(u) = 0$.
If $W$ is an arbitrary function then it follows that the only CKV which occur are the three KV already listed in \cite{sippel86}.
The Lie brackets of the three KVs $k$ and
\be
X_2 = \partial_u \: , \qquad
X_3 = \epsilon \left(u\partial_u - v\partial_v \right)
+ \eta \partial_y + \sigma \partial_z\: ,
\label{eq:KVsforcase8}
\ee
are
\[
[ k , X_2 ] = 0  \: , \qquad [ k , X_3 ] = -\epsilon k \: , \qquad [ X_2, X_3 ] = \epsilon X_2 \: .
\]
Further specialization is possible for $\epsilon \ne 0$ only if $H$ is of the form $W(s) = K \exp( c s)$
for some non-zero constants $K$ and $c$. However, this corresponds to isometry class 9.

\subsection*{Isometry class 8 ($\epsilon = 0$)}

In this case the function $H$ has the form $H = W(s)$. We have $a'''(u) = 0$. If $W(s)$ is arbitrary then $\gamma = c_A''(u) = 0$
and we have a ${\cal G}_4$ consisting of KVs $k$ and
\be
\fl X_2 = \partial_u \: , \qquad
X_3 = \eta \partial_y + \sigma \partial_z \: , \qquad
X_4 = \eta (y \partial_v + u \partial_y) + \sigma (z \partial_v + u \partial_z) \, ,
\label{eq:nov62138}
\ee
with Lie brackets
\ba
{[}k, X_2] = [k, X_4] = [k, X_3] = [X_2, X_3] = 0 \, ,
\nonumber\\
{[}X_2, X_4] = X_3 \, , \qquad [X_3, X_4] =  (\sigma^2 + \eta^2) k \, .
\nonumber
\ea
The  ${\cal G}_3$ subalgebra consisting of $\{ k, X_3, X_4 \}$ is isomorphic to that of isometry class $1i$.
Further specialization is possible if $\gamma = 0$ and
\ba
{[}\psi s + (-\sigma c_2(u) + \eta c_3(u))] W(s)_{,s} + (\sigma^2 + \eta^2)^{-1} (-\sigma c_2(u) + \eta c_3(u))'' s
\nonumber\\
+ E'(u) + (a'(u) + \Theta) W(s) = 0 \, ,
\nonumber
\ea
where $a(u) = \rho u^2 + \alpha u + \beta$ and $\rho$, $\alpha$ and $\beta$ are arbitrary constants.
Let us consider some examples. The restriction $(-\sigma c_2(u) + \eta c_3(u)) = \Theta = 0$ leads to $W(s) = K s^{-2}$ with a
${\cal S}_6 \supset {\cal H}_5 \supset {\cal G}_4$, which is class $Biv$.
The condition $(-\sigma c_2(u) + \eta c_3(u)) = c$, $c$ being a non-zero constant, leads to $W(s) = K \exp(-2 \alpha s / c)$ with a ${\cal G}_5$, which is isometry class 9.
There is an example with a ${\cal H}_5 \supset {\cal G}_4$. The conditions $\psi$ is constant and $(-\sigma c_2(u) + \eta c_3(u)) = 0$ lead to $c_A''(u) = 0$,
$W(s) = K s^c$ and $c = -2 (\alpha + \Theta)/(\alpha - \Theta)$, $\Theta \ne 0$.
Then we have an algebra consisting of the KVs $k$, (\ref{eq:nov62138}) and the HKV
\[
Z = 2 v \partial_v + y \partial_y + z \partial_z \: ,
\]
with Lie brackets
\[
[k, Z] = 2 k \, , \qquad [X_2, Z] = 0 \, , \qquad [X_3, Z] = X_3 \, , \qquad [X_4, Z] = X_4 \, .
\]

\subsection{Isometry class 9}
This is a special case of isometry class 8, the function $H$ having the form $H = K \exp{2[\eta y - \sigma z]}$,
where $K$, $\eta$ and $\sigma$ are constants, $\eta^2 + \sigma^2 \ne 0$.
Condition (\ref{eq:Hsatisfy}) leads to the following equations, by equating coefficients of $y^2$, $z^2$, $y$, $z$, etc,
\ba
& & a'''(u) = 0 \: , \qquad c''_A(u) = 0 \: , \qquad E'(u) = 0 \: ,
\nonumber\\
& & 2 \eta c_y(u) - 2 \sigma c_z(u) = -(a'(u) + \Theta) \: , \nonumber\\
& & \eta (a'(u) - \Theta) + 2 \gamma \sigma  = 0 \: , \qquad \sigma (a'(u) - \Theta)  - 2 \gamma \eta  = 0 \: . \nonumber
\ea
We find that the last two conditions demand that $a'(u) = \Theta$ and so $\psi = \gamma = 0$.
Thus only KV's are possible in this spacetime. The ${\cal G}_5$ consists of $k$ and
\ba
X_2 & = & \partial_u  \: ,
\qquad
X_3 = \partial_z + \sigma \left( u \partial_u - v \partial_v \right) \: , \nonumber\\
X_4 & = & \partial_y - \eta \left( u \partial_u - v \partial_v \right)
\: , \qquad
X_5 = u \left( \eta \partial_z + \sigma\partial_y \right)
+ (\sigma y + \eta z ) \partial_v \: ,
\label{eq:KVcase9}
\ea
and the Lie brackets are
\ba
& & [ k, X_4 ] = \eta k \: , \qquad [ k, X_3 ] = -\sigma k \: , \qquad [k, X_2 ] = 0 \: , \qquad [k, X_5 ] = 0 \: ,
\nonumber\\
& & [X_4 , X_2 ] =  \eta X_2 \: ,  \qquad [X_3 , X_2 ] =  -\sigma X_2 \: , \qquad [X_2 , X_5 ] = \eta X_3 + \sigma X_4 \: . \nonumber\\
& & [X_4 , X_3 ] = 0 \: , \qquad
[X_3, X_5] = \eta k + \sigma X_5 \: , \qquad [X_4, X_5] = \sigma k - \eta X_5 \: . \nonumber
\ea

\begin{table}
\caption{The conformal symmetry classes for the plane wave spacetimes. The quantities $a,b,c,l, \beta$, $\gamma$ and $\epsilon$ are constants. Sippel and Goenner state that classes 11 and 13 cannot be vacuum, yet we obtain vacuum solutions by putting $a+c=0$ in both cases. However, the latter are special cases of cases 12 and 14 respectively.}
\footnotesize\rm
\begin{tabular*}{\textwidth}{@{}l*{15}{@{\extracolsep{0pt plus12pt}}l}}
\br

Class & Algebra & Orbits & Metric function $H$ & $F$\\ \hline

$10$  & ${\cal H}_6 \supset {\cal G}_5$ & $3 n$ & $(A(u) y^2 + C(u) z^2)/2 + B(u) yz$ & $A(u) + C(u)$ \\

$11$  & ${\cal H}_7 \supset {\cal G}_6$ & 4 & $A(u) = au^{-2}$, $B(u) = bu^{-2}$, & $(a+c) u^{-2}$\\
 & & &  $C(u) = cu^{-2}$  & \\

$12$  & ${\cal H}_7 \supset {\cal G}_6$ & 4 & $A(u) = cu^{-2}(\sin \phi - l)$, $B(u) = c u^{-2} \cos \phi$, & $-2 \, cl u^{-2}$ \\
 & & & $C(u) = -cu^{-2}(\sin \phi + l)$, $\phi = 2\epsilon \ln |u|$ & \\

$13$  & ${\cal H}_7 \supset {\cal G}_6$ & 4 & $A(u)=a$, $B(u)=b$, $C(u)=c$, & $a + c$ \\

$14$  & ${\cal H}_7 \supset {\cal G}_6$ & 4 & $A(u) = c\sin \phi + l$, $B(u) = c \cos \phi$, & $2 \, l$ \\
 & & & $C(u) =  -c\sin \phi + l$, $\phi = 2\epsilon u$ & \\

$10i$ & ${\cal S}_7 \supset {\cal H}_6 \supset {\cal G}_5$ & 4 & $A(u)=  c (u^2 + \beta)^{-2} (\sin \phi + l)$ & $2 \, c \, l (u^2 + \beta)^{-2}$\\
 & & & $B(u) = c (u^2 + \beta)^{-2} \cos \phi$  & \\
 & & & $C(u) = c (u^2 + \beta)^{-2} (-\sin \phi + l)$,  & \\
 & & & $\phi = 2 \gamma \int (u^2 + \beta)^{-1} du$  & \\

$10ii$ & ${\cal S}_7 \supset {\cal H}_6 \supset {\cal G}_5$ & 4 & $A(u) = -a (u^2 + \beta)^{-2}$, $B(u) = - b(u^2 + \beta)^{-2}$ & $-(a+c) (u^2 + \beta)^{-2}$\\
 & & & $C(u) = - c (u^2 + \beta)^{-2}$  & \\

$10iii$ & ${\cal C}_7 \supset {\cal H}_6 \supset {\cal G}_5$ & 4 & $a'''(u) \ne 0$ & $A(u) + C(u)$ \\

\br
\end{tabular*}
\label{tab:planewaves}
\end{table}

\subsection{Isometry class 10: plane wave spacetimes}
\label{sec:genplane}
We note that when the function $H$ has the form
\be
2H = A(u) y^2 + 2B(u) yz + C(u) z^2 \: ,
%%%\label{eq:H}
\ee
then (\ref{eqn:ppw}) represents an Einstein-Maxwell spacetime and admits at least an ${\cal H}_6$. Such a spacetime is
referred to as a {\it plane wave spacetime}.
When $A(u) = - C(u)$ the spacetime is vacuum and when $A(u) = C(u)$ and $B(u)=0$ the spacetime is conformally flat.
Since the plane wave admits an ${\cal H}_6$ we can put $\Theta=0$ since this parameter
just corresponds to the addition of an HKV given by $Z$ of equation (\ref{eq:vectorZ}).
Thus, we need only consider the proper CKV given by
\ba
Y^u & = & a(u) \: , \nonumber\\
Y^v & = & a''(u) \, x_A x^A/4 + c'_B(u) \, x^B + E(u)  \: , \nonumber\\
Y^A & = & a'(u) \, x^A / 2 + \gamma \, \epsilon_{AB} x^B + c_A(u) \: ,
\ea
with conformal scalar $\psi = a'(u) /2$. The condition (\ref{eq:Hsatisfy}) allows us to arrive at the
following conclusions.
Condition (\ref{eq:Hsatisfy}) leads to the following equations, by equating coefficients of $y^2$, $z^2$, $yz$, $y$, $z$ respectively and
functions of $u$ only
\ba
a'''(u)/4 + A'(u) a(u)/2 + A(u) a'(u) - \gamma B(u) = 0 \label{eq:yy} \: , \\
a'''(u)/4 + C'(u) a(u)/2 + C(u) a'(u) + \gamma B(u) = 0 \label{eq:zz} \: , \\
B'(u) a(u) + 2 a'(u) B(u) + \gamma (A(u) - C(u)) = 0 \label{eq:yz} \: , \\
c''_y(u) + A(u) c_y(u) + B(u) c_z(u) = 0 \label{eq:y} \: , \\
c''_z(u) + B(u) c_y(u) + C(u) c_z(u) = 0 \label{eq:z} \: , \\
E'(u) = 0 \: . \label{eq:u}
\ea
Equations (\ref{eq:y}) and (\ref{eq:z}) correspond to the condition on $H$ to admit the four KV listed in
case 10 of \cite{sippel86}
\be
X_i = c_A(u)_i \, \partial_A + c'_A(u)_i \, x^A \, \partial_v \; ,
\label{eq:KVcase10}
\ee
$i= 2, \dots , 5$, and so we can put $c_A(u) = 0$ for the remaining seventh CKV. Also, $E=constant$ just corresponds to the addition of the KV $k$ and so we shall put $E=0$. Thus the most general CKV is given by
\be
Y = a(u) \partial_u + \case{1}{4} a''(u) \, x^A x_A \partial_v
+ ( \case{1}{2} a'(u) \, x^A  + \gamma \, \epsilon_{AB} x^B ) \partial_A \: .
\label{eq:CKVcase10}
\ee
The Lie brackets of the seven CKV are
\ba
\fl [ k, Z ] = 2k  \: , \qquad [ k, X_i ] = 0  \: , \qquad [ Z , X_i ] = - X_i  \: ,
\qquad {[ X_i, X_j ]} = 2 \, Q_{[ij]} \, k   \: ,
\nonumber\\
\fl [ k, Y ]  = 0  \: , \qquad  [ Z, Y ]  = 0   \: , \qquad  [ Y, X_i ]  = m \, X_i  \: , \ea
where $Q_{ij} =  \delta^{AB} \, c_A(u)_i \, c'_B(u)_j$, and so $Q_{[ij]}$ is a constant, and $m$ is a constant.
Equation (\ref{eq:yz}) gives
\be
\gamma = (a(u)B'(u) + 2 a'(u) B(u) / (C(u) - A(u))
\label{eq:braineqn0}
\ee
which implies that the rhs of this equation must be a constant, and exclusion of type $O$ gives $A(u)-C(u) \ne 0$. Substituting this equation into (\ref{eq:yy}) and (\ref{eq:zz}) we obtain
\ba
& & (A(u) - C(u)) a'''(u)/4 + [A(u) (A(u) - C(u)) + 2B^2(u)]a'(u) \nonumber\\
& & +[A'(u) (A(u) - C(u)) /2 + B(u) B'(u)] a(u) = 0 \: , \label{eq:braineqn1} \\
& & (A(u) - C(u)) a'''(u)/4 + [C(u) (A(u) - C(u)) - 2B^2(u)]a'(u) \nonumber\\
& & +[C'(u) (A(u) - C(u)) /2 - B(u) B'(u)] a(u) = 0 \: . \label{eq:braineqn2}
\ea
Subtracting (\ref{eq:braineqn1}) and (\ref{eq:braineqn2}) gives
\ba
& & 2 [(A(u) - C(u))^2 + 4B^2(u)] a'(u) \nonumber\\
& & + [(A(u) - C(u)) (A'(u) - C'(u)) + 4B(u) B'(u)] a(u) = 0
\nonumber
\ea
which integrates to give
\be
a(u) = L | (A(u) - C(u))^2 + 4B^2(u) | ^{-{1 \over 4}}
\label{eq:braineqn4}
\ee
where $L$ is an arbitrary constant. Adding (\ref{eq:braineqn1}) and (\ref{eq:braineqn2}) gives
\be
a'''(u) + 2 (A(u) + C(u)) a'(u) + (A'(u) + C'(u)) a(u) = 0 \: .
\label{eq:braineqn5}
\ee
Substituting (\ref{eq:braineqn4}) into (\ref{eq:braineqn5}) gives two conditions to be satisfied by the functions $A(u)$, $B(u)$, $C(u)$ and their derivatives in order for a CKV to exist. We note that Classes 11-14 in table \ref{tab:planewaves} admit an ${\cal H}_7$ and
so can admit no further symmetries. We present other examples of plane wave spacetimes admitting proper SCKV or non-special CKV.

\noindent {\bf Example (i)} The function $H$ has the form
\ba
A(u) & = & c (u^2 + \beta)^{-2} (\sin \phi + l) \: \: , \qquad B(u)=  c (u^2 + \beta)^{-2} \cos \phi \: , \nonumber\\
C(u) & = & c (u^2 + \beta)^{-2} (-\sin \phi + l) \: ,
\label{eq:pp+SCKV}
\ea
where
\be
\phi = 2 \gamma \int (u^2 + \beta)^{-1} du
\label{eq:defnphi}
\ee
and $c$, $\beta$ and $\gamma$ are constants. The proper SCKV is
\be
S = (u^2 + \beta) \partial_u + \case{1}{2} (y^2 + z^2)\partial_v
+ u \left( y \partial_y + z \partial_z \right)
+ \gamma \left( z \partial_y - y \partial_z \right)
\label{eq:sckv}
\ee
with conformal scalar $\psi=u$. This spacetime is vacuum iff $l=0$.

\noindent {\bf Example (ii)} The function $H$ has the form
\ba
A(u)= m (u^2 + \beta)^{-2} \: , \qquad B(u)=  n (u^2 + \beta)^{-2} \: ,
\nonumber\\
C(u)= p (u^2 + \beta)^{-2} \: ,
\label{eq:stgammazero}
\ea
where $m$, $n$ and $p$ are constants. The corresponding proper SCKV is
\be
S = (u^2 + \beta) \partial_u + \case{1}{2} (y^2 + z^2)\partial_v
+ u \left( y \partial_y + z \partial_z \right) \: .
\label{eq:sckvgammazero}
\ee
Setting $\gamma=0$ in (i) leads to classes of functions of $H$ more restricted than permitted in (ii).
This spacetime is vacuum iff $m=-p$.

The following are examples with non-special CKV. We substitute (\ref{eq:braineqn4}) into (\ref{eq:braineqn5}) and demand
$a'''(u) \ne 0$.

\noindent {\bf Example (iii)}
\ba
A(u) = (\alpha + 1) u^2 / 2 - 3u^{-2} / 4 \: , \qquad B(u) = 0 \: ,
\nonumber\\
C(u) = (\alpha - 1) u^2 / 2 - 3u^{-2} / 4 \: , \qquad
a(u) = u^{-1} \: ,
\nonumber
\ea
and the CKV and conformal scalar are
\[
Y = u^{-2}
( u \partial_u + \case{1}{2} x^A x_A  \partial_v
- \case{1}{2} x^A \partial_A ) \: , \qquad
\psi = -u^{-2} / 2 \: ,
\]
where $\alpha$ is a constant.

\noindent {\bf Example (iv)}

\ba
A(u) & = & (\alpha + 1) e^{-2u} - \case{1}{4}  \: , \qquad B(u) = \beta e^{-2u}  \: , \nonumber\\
C(u) & = & (\alpha - 1) e^{-2u} - \case{1}{4} \: , \qquad a(u) = e^u \: ,
\ea
where $\alpha$ and $\beta$ are constants. The CKV and conformal scalar are
\[
Y = e^u ( \partial_u + (r^2 / 4) \partial_v + (r/2) \partial_r) \: , \qquad \psi = e^u / 2 \: .
\]

Of course, every plane wave spacetime is conformally related to a vacuum plane wave spacetime \cite{hall90}.

\section{Conformally related pp-wave spacetimes}

\label{sec:typeNdual}
The transformation (44) in \cite{tupper03} also applies to general pp-wave spacetimes.

\begin{thm} \label{propconf}
Given an arbitrary pp-wave spacetime (\ref{eqn:ppw}), then the conformally related spacetime
\be
d{\bar s}^2 = \Omega^2(u) ds^2
\label{eq:confRelatedePpwave}
\ee
is also a pp-wave spacetime. The coordinate transformation
\ba
u & = & \int (\Omega({\bar u}))^{-2} d{\bar u} \: , \qquad v = {\bar v} - (\ln \Omega({\bar u}))_{,{\bar u}}
({\bar y}^2 + {\bar z}^2) / 2 \: , \nonumber\\
y & = & {\bar y} \, / \, \Omega({\bar u}) \: ,  \qquad z = {\bar z} \, / \, \Omega({\bar u}) \: ,
\label{eq:conftransf}
\ea
(where, with an abuse of notation, $\Omega({\bar u})$ is the function $\Omega(u)$ written in terms of the new coordinate ${\bar u}$)
allows us to put the metric (\ref{eq:confRelatedePpwave}) into the form of a pp-wave spacetime
\be
d{\bar s}^2 = -2 {\bar H} ({\bar u}, {\bar y}, {\bar z}) d{\bar u}^2 - 2 d{\bar u} d{\bar v} + d{\bar y}^2 + d{\bar z}^2 \: .
\ee
The new metric function ${\bar H}$ is given by
\ba
2 {\bar H} ({\bar u}, {\bar y}, {\bar z}) & = &
2H ({\bar u}, {\bar y}, {\bar z}) \Omega^{-2}({\bar u})
 - \Omega^{-1}({\bar u}) \Omega_{,{\bar u}{\bar u}}({\bar u}) ({\bar y}^2 + {\bar z}^2) \: ,
\label{eq:newH}
\ea
where $H ({\bar u}, {\bar y}, {\bar z})$ is the function $H(u,y,z)$ written in terms of the new coordinates
${\bar u}, {\bar y}, {\bar z}$.
\end{thm}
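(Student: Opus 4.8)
The plan is to prove both assertions at once by direct substitution: I would show that the coordinate change (\ref{eq:conftransf}) brings the rescaled metric (\ref{eq:confRelatedePpwave}) into the canonical form (\ref{eqn:ppw}). This suffices, because a metric written in the form (\ref{eqn:ppw}) automatically admits $\partial_{\bar v}$ as a covariantly constant, nowhere-zero null vector (cf.\ (\ref{eqn:kvk})), so it is a pp-wave spacetime; moreover the Petrov type is a conformal invariant, so type $N$ (respectively $O$) is preserved and non-flatness is inherited from the original metric. Identifying the coefficient of $d\bar u^2$ in the transformed metric then reads off $\bar H$ and yields (\ref{eq:newH}).

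First I would compute the pulled-back differentials from (\ref{eq:conftransf}). Writing a prime for $d/d\bar u$, one has $du = \Omega^{-2} d\bar u$, hence $\Omega^2\,du = d\bar u$; also $dy = \Omega^{-1} d\bar y - \bar y\,\Omega'\Omega^{-2} d\bar u$ and similarly for $dz$, while
\be
dv = d\bar v - \case{1}{2}\,(\Omega'/\Omega)'\,(\bar y^2 + \bar z^2)\,d\bar u - (\Omega'/\Omega)\,(\bar y\,d\bar y + \bar z\,d\bar z).
\ee
Substituting into $\Omega^2 ds^2 = \Omega^2[-2\,du\,dv - 2H\,du^2 + dy^2 + dz^2]$, the first term becomes $-2\,d\bar u\,dv$ (using $\Omega^2 du = d\bar u$), the potential term becomes $-2H\Omega^{-2} d\bar u^2$, and the transverse term $\Omega^2(dy^2+dz^2)$ expands into $d\bar y^2 + d\bar z^2$ together with cross terms in $d\bar u\,d\bar y$, $d\bar u\,d\bar z$ and a $d\bar u^2$ contribution.

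The crux is then the bookkeeping of the $d\bar u$-containing pieces. The mixed terms $-2\,\Omega' \Omega^{-1}(\bar y\,d\bar y + \bar z\,d\bar z)\,d\bar u$ produced by $\Omega^2(dy^2+dz^2)$ cancel exactly against the mixed terms $+2\,\Omega'\Omega^{-1}(\bar y\,d\bar y + \bar z\,d\bar z)\,d\bar u$ coming from $-2\,d\bar u\,dv$; this cancellation is precisely what the $v$-part of (\ref{eq:conftransf}) is engineered to achieve, and it guarantees the absence of $d\bar u\,dx^A$ terms required by (\ref{eqn:ppw}). Collecting the remaining $d\bar u^2$ terms gives the coefficient $[(\Omega'/\Omega)' + (\Omega'/\Omega)^2](\bar y^2+\bar z^2) - 2H\Omega^{-2}$, and the elementary identity $(\Omega'/\Omega)' + (\Omega'/\Omega)^2 = \Omega''/\Omega$ collapses this to $\Omega''\Omega^{-1}(\bar y^2+\bar z^2) - 2H\Omega^{-2}$. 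Equating this to $-2\bar H$ delivers (\ref{eq:newH}). The only real obstacle is keeping the transverse cross-terms and the $d\bar u^2$ collection straight; once the stated derivative identity is applied, no further difficulty remains.
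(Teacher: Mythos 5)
Your proposal is correct: the paper itself omits the computation (deferring to the transformation in \cite{tupper03}), and your direct substitution is exactly the verification intended — the cross-terms in $d\bar u\,d\bar y$ and $d\bar u\,d\bar z$ cancel by design of the $v$-shift, and the identity $(\Omega'/\Omega)' + (\Omega'/\Omega)^2 = \Omega''/\Omega$ collapses the $d\bar u^2$ coefficient to give precisely (\ref{eq:newH}). The only caveat worth a footnote is that "non-flatness is inherited" is automatic for type $N$ (conformal invariance of the Weyl tensor) but not literally for type $O$, where a suitable $\Omega$ can render the rescaled metric flat; this is a gap in the theorem's statement rather than in your argument.
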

\begin{table}
\caption{Conformally related spacetimes. The possible Lie algebra types are listed.}
\footnotesize\rm
\begin{tabular*}{\textwidth}{@{}l*{15}{@{\extracolsep{0pt plus12pt}}l}}
\br
Type & Algebras \\ \hline
$C3$ & ${\cal C}_2 \supset {\cal G}_1$, ${\cal S}_2 \supset {\cal G}_1$, ${\cal H}_2$  \\

$C4$ & ${\cal C}_2 \supset {\cal G}_1$, ${\cal S}_2 \supset {\cal G}_1$, ${\cal H}_2$   \\

$C5$ & ${\cal C}_3 \supset {\cal G}_2$, ${\cal S}_3 \supset {\cal G}_2$, ${\cal H}_3$   \\

$C5i$ & ${\cal C}_4 \supset {\cal G}_2$, ${\cal C}_4 \supset {\cal S}_3 \supset {\cal G}_2$,
${\cal C}_4 \supset {\cal G}_3$, ${\cal C}_4 \supset {\cal H}_3$  \\

$C5ii$ & ${\cal C}_4 \supset {\cal G}_2$, ${\cal C}_4 \supset {\cal G}_3$,
${\cal C}_4 \supset {\cal H}_3$, ${\cal H}_4$, ${\cal C}_4
\supset {\cal S}_3 \supset {\cal G}_2$  \\

$C6$ & ${\cal C}_3 \supset {\cal G}_2$, ${\cal S}_3 \supset {\cal G}_2$, ${\cal H}_3$   \\

$C6i$ & ${\cal C}_5 \supset {\cal G}_2$, ${\cal C}_5 \supset {\cal H}_4$,
${\cal C}_5 \supset {\cal S}_4 \supset {\cal G}_3$, ${\cal C}_5 \supset {\cal S}_3 \supset {\cal G}_2$   \\

$C6ii$ & ${\cal C}_5 \supset {\cal G}_2$, ${\cal S}_5 \supset {\cal G}_3$   \\

$C6iii$ & ${\cal C}_4 \supset {\cal G}_2$, ${\cal C}_4 \supset {\cal S}_3 \supset {\cal G}_2$,
${\cal C}_4 \supset {\cal G}_3$, ${\cal C}_4 \supset {\cal H}_3$, ${\cal S}_4 \supset {\cal G}_3$,
${\cal S}_4 \supset {\cal H}_3$ \\

$C6iv$ & ${\cal C}_4 \supset {\cal G}_2$, ${\cal C}_4 \supset {\cal S}_3 \supset {\cal G}_2$,
${\cal C}_4 \supset {\cal G}_3$, ${\cal C}_4 \supset {\cal H}_3$, ${\cal S}_4 \supset {\cal G}_3$,
${\cal S}_4 \supset {\cal H}_3$ \\

$C7$ & ${\cal C}_3 \supset {\cal G}_2$, ${\cal S}_3 \supset {\cal G}_2$, ${\cal H}_3$ \\
$C7i$ & ${\cal C}_4 \supset {\cal G}_2$, ${\cal S}_4 \supset {\cal G}_3$, ${\cal S}_4 \supset {\cal H}_3$ \\
$C7ii$ & ${\cal C}_5 \supset {\cal G}_2$, ${\cal S}_5 \supset {\cal G}_3$ \\
C8 & ${\cal C}_3 \supset {\cal G}_1$, ${\cal C}_3 \supset {\cal H}_2$, ${\cal S}_3 \supset {\cal H}_2$,
${\cal C}_3 \supset {\cal S}_2 \supset {\cal G}_1$ \\
$C8$($\epsilon = 0$) &  ${\cal C}_4 \supset {\cal G}_3$, ${\cal S}_4 \supset {\cal G}_3$, ${\cal H}_4$ \\
$C9$ & ${\cal C}_5 \supset {\cal G}_3$, ${\cal S}_5 \supset {\cal H}_4$ \\

\br
\end{tabular*}
\label{tab:generalConfTransf}
\end{table}

The CKV of $ds^2$ become CKV of $d{\bar s}^2$ with the same conformal Lie algebra,
the conformal scalars being related by ${\bar \psi} = (\ln \Omega)_{,u} Y^u + \psi$. Those CKV of $ds^2$ with $Y^u = 0$ will have their
conformal scalars unchanged, in particular the KV $k$ will remain a KV for the new spacetime.
Table \ref{tab:generalConfTransf} lists the possible conformal symmetry Lie algebras arising from this conformal process.
The general equation for the new conformal scalar is
\be
{\bar \psi} = ({\bar a}({\bar u})_{,{\bar u}} - {\bar \Theta})/2 = a(u) (\ln \Omega(u))_{,u} +
(a'(u) - \Theta)/2,
\label{eq:general}
\ee
where $\Omega^2(u) du = d {\bar u}$. We wish to solve for the functional form $\Omega(u)$ which will give particular conformal
symmetries. To obtain an SCKV we require
${\bar a}({\bar u}) = {\bar \rho} {\bar u}^2 + {\bar \alpha} {\bar u} + {\bar \beta}$, with
${\bar \rho}, {\bar \alpha}, {\bar \beta}$ constants. This gives
$(2{\bar \rho} {\bar u} + {\bar \alpha} - {\bar \Theta})/2 = a(u) (\ln \Omega(u))_{,u} +
(a'(u) - \Theta)/2$. In order to solve for $\Omega(u)$ we use the substitution
$\Omega^2(u) du = d {\bar u}$ in the lhs. For the case of a proper SCKV we can differentiate with respect to $u$ to get
\be
{\bar \rho} \Omega^2(u) = a'(u) (\ln \Omega(u))_{,u} + a(u) (\ln \Omega(u))_{,uu} + a''(u)/2.
\label{eq:SCKV}
\ee
Equation (\ref{eq:SCKV}) is nonlinear in $\Omega(u)$. It is difficult to say anything in general about
the solutions to this, even the number of independent solutions. However, it can be written in the form
\[
\int {d \omega \over ({\bar \rho} \omega^2 + \mu \omega + \nu)} = \int a^{-1}(u) du \, , \qquad \omega_{,u} = \Omega^2(u) \, ,
\]
where $\mu, \nu$ are constants.
For an HKV (not necessarily proper HKV) we do not need to do the differentiation, we just take (\ref{eq:general})
as it is to get
\be
({\bar \alpha} - {\bar \Theta})/2 = a(u) (\ln \Omega(u))_{,u} + (a'(u) - \Theta)/2 \, .
\ee
This last equation gives, for $a(u) \ne 0$
\be
(\ln \Omega(u))_{,u} = [({\bar \alpha} - {\bar \Theta}) - (a'(u) - \Theta) ] / 2 \, a(u)
\ee
which gives, for $a'(u) \ne 0$
\be
\Omega(u) = a^{- 1 / 2}(u) \exp \left[{1 \over 2} ({\bar \alpha} - {\bar \Theta} + \Theta) \int a^{-1}(u) du \right] \, .
\ee

If a further spacetime is formed by applying this conformal technique to $d{\bar s}^2$ then in general this spacetime will not be the original spacetime $ds^2$. However, if we choose $\Omega(u) = u^{-1}$ then any two conformally related spacetimes will be `dual' to each other, i.e., repeated application of this conformal factor will just transform one spacetime into the other.
In this case the coordinate transformation (\ref{eq:conftransf}) reduces to
\[
{\bar u} = -u^{-1} \; , \qquad  {\bar v} = v - u^{-1} (y^2 + z^2)/2 \; , \qquad {\bar y} = u^{-1} y \; , \qquad {\bar z} = u^{-1} z \; .
\]
The individual conformal scalars are related by ${\bar \psi} = -u^{-1} Y^u + \psi$. We now consider these `dual' spacetimes in detail. The conformal scalar corresponding to the general SCKV
(\ref{eq:generalSCKV}) is ${\bar \psi} = \beta {\bar u} - (\alpha + \Theta)/2$ leading to an SCKV.
It is straightforward to verify that each of the classes $A$, $B$, $C$ and $D$ in section \ref{sec:typeN}, the new spacetimes
are in the same class. We determine the new spacetimes and conformal symmetries corresponding to the isometry classes
in section \ref{sec:SandG}, and their specializations. We drop the bars on the coordinates. The dual spacetimes are
listed in table \ref{tab:dual}.
\begin{table}
\caption{The conformal symmetry classes of the dual pp-wave spacetimes. Details are given only for spacetimes which have not been listed previously, i.e., for those which are new or
{\it specializations} of conformal symmetry classes in tables 1 - 3.}
\footnotesize\rm
\begin{tabular*}{\textwidth}{@{}l*{15}{@{\extracolsep{0pt plus12pt}}l}}
\br

Dual & Class & Algebra &  Metric function $H$ & $F$\\ \hline

$D3$ & $1$  & ${\cal H}_2$ & $W(u^{-1} s, u^{-1} t)$  & $W_{,ss} + W_{,tt}$ \\

$D4$ & $1$ & ${\cal S}_2$ & $u^{-2}W(u^{-1} s, u^{-1} t)$  & $u^{-2}(W_{,ss} + W_{,tt})$\\

$D5$ & $2$ & ${\cal H}_3$  & $W(u^{-1} r)$  & $W_{,rr} + r^{-1}W_{,r}$\\

$D5i$ & $6iii$  & & & \\

$D5ii$ & $2$ & ${\cal C}_4 \supset {\cal H}_3 $  & $ \delta (-u)^\sigma r^{-\sigma} - \sigma (2 - \sigma)^2 u^{-2} r^2 $
& $\delta (-u)^\sigma \sigma^2 r^{-(\sigma + 2)} - 4 \sigma (2 - \sigma)^2 u^{-2}$ \\

$D6$ & $2$ & ${\cal S}_3 \supset {\cal G}_2$ & $u^{-2} W(u^{-1} r)$  & $u^{-2}(W_{,rr} + r^{-1}W_{,r})$\\

$D6i$ & $Ci$ & ${\cal C}_5 \supset {\cal S}_3 \supset {\cal G}_2$ & $\delta r^{-2} + \nu u^{-4} r^2 / 4$  &
$4 \delta r^{-4} + \nu u^{-4}$\\

$D6ii$ & $6ii$  & & & \\

$D6iii$ & $5i$ & & & \\

$D6iv$ & $2$ & ${\cal S}_4 \supset {\cal H}_3$  & $\delta (-u)^{\sigma - 2} r^{-\sigma}$  &
$\delta (-u)^{\sigma - 2} \sigma^2 r^{-(\sigma + 2)}$ \\

$D7$ & $1$ & ${\cal S}_3 \supset {\cal H}_2$ & $u^{-2} W(u^{-1} r) \, e^{2 c \theta}$  &
$u^{-2} e^{2 c \theta} [W_{,rr} + r^{-1} W_{,r} + 4 c^2 W]$ \\

$D7i$ & $3$ & ${\cal S}_4 \supset {\cal H}_3$ & $\delta (-u)^{\sigma - 2} r^{-\sigma} e^{2 c \theta}$  &
$(\sigma^2 + 4 c^2) r^{-2} H$ \\

$D7ii$ & $7ii$ & & & \\

$D8$ & $1$ & ${\cal S}_3 \supset {\cal H}_2$ & $u^{-2} W(u^{-1} s) \, e^{-2 t / u}$  &
$u^{-2} e^{-2t/u} [W_{,ss} + 4 u^{-2} W]$ \\

$D8(\epsilon = 0)$ & $1i$ & ${\cal S}_4 \supset {\cal G}_3$ & $u^{-2} W(u^{-1} s) $  & $(\eta^2 + \sigma^2) u^{-2} W_{,ss}$ \\

$D8(\epsilon = 0)i$ & $1i$ & ${\cal S}_5 \supset {\cal H}_4$ & $u^{-2} K (-u^{-1} s)^c  $  & $(\eta^2 + \sigma^2) c(c-1) K u^{-2} (-u)^{-c} s^{c-2}$ \\

$D9$ & New & ${\cal S}_5 \supset {\cal H}_4$ & $u^{-2} K \exp [2 u^{-1} (\sigma z - \eta y)]$  &
$4 u^{-2} (\eta^2 + \sigma^2) H$\\

\br
\end{tabular*}
\label{tab:dual}
\end{table}
\subsection{Dual of isometry class 1$i$}
The new dual spacetime also belongs to the same class, i.e., ${\cal G}_3 \mapsto {\cal G}_3$. The vacuum condition implies type $O$ and hence flatness.

\subsection{Dual of isometry class 2}
The new dual spacetime also belongs to the isometry class 2 and so the KV $X_2$ (\ref{eq:KVsforcase2}) remains a KV for the new spacetime, i.e., ${\cal G}_2 \mapsto {\cal G}_2$.
The duals of the vacuum spacetimes are also vacuum.

\subsection{Dual of isometry class 3}
The new function is $H=W(u^{-1}s, u^{-1}t)$, $s=y \sin \phi - z \cos \phi$, $t=y \cos \phi + z \sin \phi$,
$\phi=- \epsilon \ln |-u|$.
The KV $X_2$ (\ref{eq:KVsforcase3}) becomes an HKV given by
\be
X_2 = \epsilon \left(z \partial_y - y \partial_z \right)
- \left( y \partial_y + z \partial_z + u \partial_u + v \partial_v  \right) \, ,
\label{eq:dualclass3HKV}
\ee
in the new spacetime and so we have ${\cal G}_2 \mapsto {\cal H}_2$, i.e., the dual
spacetime is of isometry class 1 admitting a proper HKV.

\subsection{Dual of isometry class 4}

The new function is $H= u^{-2} W(u^{-1}s, u^{-1}t)$, $s=y \sin \phi - z \cos \phi$, $t=y \cos \phi + z \sin \phi$,
$\phi=- \epsilon u^{-1}$.
The KV $X_2$ (\ref{eq:KVsforcase4}) becomes an SCKV given by
\be
X_2 = \epsilon \left( z \partial_y - y \partial_z \right)
+ u \left( y \partial_y + z \partial_z \right) + u^2 \partial_u + \case{1}{2} (y^2 + z^2) \partial_v  \, ,
\label{eq:dualclass4SCKV}
\ee
in the new spacetime and so we have ${\cal G}_2 \mapsto {\cal S}_2$, i.e., the dual
spacetime is of isometry class 1 admitting a proper SCKV.

\subsection{Dual of isometry class 5}

The new function is $H= W(u^{-1}r)$.
The KV $X_3$ in (\ref{eq:KVsforcase5}) is unchanged and the KV $X_2$ in (\ref{eq:KVsforcase5}) becomes an HKV given by
\be
X_2 = - \left( x^A \partial_A + u \partial_u + v \partial_v  \right) \, ,
\label{eq:dualclass5HKV}
\ee
in the new spacetime. Thus we have ${\cal G}_3 \mapsto {\cal H}_3$ and the dual spacetime is
of isometry class 2 admitting a proper HKV.

For the ${\cal S}_4$ specialization given by equation (\ref{eq:summarymetric5i}) the new metric function is given by
$H = \zeta \ln |- u^{-1} r|$ and the coordinate transformation ${\bar v} = v - \zeta u ( \ln |-u| - 1)$ transforms $H$ into
$H = \zeta \ln |r|$, which is the class 6 metric specialization (\ref{eq:summarymetric6iii}).
The SCKV $S$ (\ref{eq:case5SCKV1}) becomes the KV
\be
S = \partial_u\, ,
\label{eq:dualclass5KV}
\ee
while $X_2$ becomes the HKV $Z$ in (\ref{eq:case6HKV}). Thus we have ${\cal S}_4 \mapsto {\cal H}_4$.

For the ${\cal C}_4$ specialization given by (\ref{eq:summarymetric5ii}) the new function is given by
\be
H = \delta (-u)^\sigma r^{-\sigma} - \sigma (2 - \sigma)^2 u^{-2} r^2 \, ,
\label{eq:dualclass5Hiii}
\ee
and the non-special proper CKV becomes the non-special proper CKV
\be
C_1 = (-u)^q \,
[ u^2 \partial_u - (\sigma / (2 - \sigma)) u r \partial_r
+( \sigma (\sigma + 2)(\sigma - 2)^{-2} / 2 ) r^2 \partial_v ],
\label{eq:dualclass5CKV}
\ee
and we have ${\cal C}_4 \mapsto {\cal C}_4 \supset {\cal H}_3$. The dual spacetime is an isometry class 2 specialization.

\subsection{Dual of isometry class 6}

The new function is $H= u^{-2} W(u^{-1}r)$ and in general is an isometry class 2 spacetime.
The KV $X_3$ in (\ref{eq:KVsforcase6}) is unchanged, and the KV $X_2$ in (\ref{eq:KVsforcase6}) becomes an SCKV
\be
X_2 = u r \partial_r + u^2 \partial_u + \case{1}{2} r^2 \partial_v  \, ,
\label{eq:dualclass6SCKV}
\ee
in the new spacetime. Thus we have ${\cal G}_3 \mapsto {\cal S}_3$, where ${\cal S}_3 \supset {\cal G}_2$.

For the ${\cal C}_5 \supset {\cal G}_3$ specialization given by (\ref{eq:summarymetric6i}), the new metric function is
\[
H = \case{1}{4} \nu u^{-4} r^2 + \delta r^{-2}
\]
which is a special case (not listed) of the case $Ci$ of section \ref{sec:typeN} with ${\cal C}_5 \supset {\cal S}_3 \supset {\cal G}_2$.
The CKV (\ref{eq:case6CKV1}) and (\ref{eq:case6CKV2}) become
the non-special proper CKV
\ba
C_1 & = & u^2 \cos (-l u^{-1}) {\partial_u}
+ ( u \cos (-l u^{-1}) - (l / 2) \sin (-l u^{-1}) ) r {\partial_r} \nonumber\\
& & + [ \case{1}{4} (2  - l^2 u^{-2})  \cos (-l u^{-1})  - (l / 2) u^{-1}  \sin (-l u^{-1}) ] r^2 {\partial_v} \, , \\
\label{eq:dual6specCKV1}
C_2 & = & u^2 \sin (-l u^{-1}) {\partial_u}
+ ( u \sin (-l u^{-1}) + (l / 2) \cos (-l u^{-1}) ) r {\partial_r} \nonumber\\
& & + [ \case{1}{4} (2  - l^2 u^{-2})  \sin (-l u^{-1})  + (l / 2) u^{-1}  \cos (-l u^{-1}) ] r^2 {\partial_v} \, .
\label{eq:dual6specCKV2}
\ea
The CKV (\ref{eq:case6CKV3}) and (\ref{eq:case6CKV4}) become the non-special proper CKV
\ba
C_1 & = & u^2 \cosh (-l u^{-1}) {\partial_u}
+ ( u \cosh (-l u^{-1}) + (l / 2) \sinh (-l u^{-1}) ) r {\partial_r} \nonumber\\
& & + [ \case{1}{4} (2  + l^2 u^{-2})  \cosh (-l u^{-1})  + (l / 2) u^{-1}  \sinh (-l u^{-1}) ] r^2 {\partial_v} \, , \\
\label{eq:dual6specCKV3}
C_2 & = & u^2 \sinh (-l u^{-1}) {\partial_u}
+ ( u \sinh (-l u^{-1}) + (l / 2) \cosh (-l u^{-1}) ) r {\partial_r} \nonumber\\
& & + [ \case{1}{4} (2  + l^2 u^{-2})  \sinh (-l u^{-1})  + (l / 2) u^{-1}  \cosh (-l u^{-1}) ] r^2 {\partial_v} \, .
\label{eq:dual6specCKV4}
\ea
For the ${\cal S}_5$ specialization given by (\ref{eq:summarymetric6ii}) the new function is identical to the original and the SCKV $S$ and the KV $X_2$
transform into each other.
For the ${\cal H}_4$ specialization given by (\ref{eq:summarymetric6iii}) the new function is $H = \zeta u^{-2} \ln |-u^{-1} r|$
and the coordinate transformation ${\bar v} = v + \zeta u^{-2} \ln|-u|$ transforms the dual metric into the isometry class 5 specialization
(\ref{eq:summarymetric5i}). Thus the spacetimes with metrics (\ref{eq:summarymetric5i}) and (\ref{eq:summarymetric6iii}) are duals of each other.

For the ${\cal H}_4$ specialization given by (\ref{eq:summarymetric6iv}) the new function is $H = \delta (-u)^{\sigma - 2} r^{-\sigma}$. The HKV $Z$
given by (\ref{eq:case6HKVlast}) gives rise to the HKV
\be
-2 Z = (\sigma + 2) u \partial_u + (\sigma - 2) v \partial_v + \sigma r \partial_r \, .
\ee
The dual spacetime is an isometry class 2 specialization with ${\cal S}_4 \supset {\cal H}_3$.

\subsection{Dual of isometry class 7}
The new function is $H= u^{-2} W(u^{-1}r) e^{2 c \theta}$.
The KV $X_2$ in (\ref{eq:KVsforcase7}) becomes an SCKV given by
\be
X_2 = u r \partial_r + u^2 \partial_u + \case{1}{2} r^2 \partial_v  \, ,
\label{eq:dualclass7SCKV}
\ee
and KV $X_3$ in (\ref{eq:KVsforcase7}) becomes an HKV given by
\be
X_3 = - c \left( r \partial_r + u \partial_u + v \partial_v  \right)
- \partial_\theta \, .
\label{eq:dualclass7HKV}
\ee
Thus we have ${\cal G}_3 \mapsto {\cal S}_3 \supset {\cal H}_2$, i.e., in general the dual spacetime is an isometry class 1 specialization.

For the ${\cal H}_4$ specialization the new function is given by
\be
H =  \delta (-u)^{\sigma - 2} r^{ - \sigma} e^{2 c \theta} \, .
\label{eq:dualclass7H}
\ee
The HKV $X_3 - \epsilon Z$ becomes the KV
\be
X_4 = c (u \partial_u - v \partial_v) - (\sigma / 2) \partial_\theta
\ee
and the dual spacetime admits an ${\cal S}_4 \supset {\cal H}_3 \supset {\cal G}_2$. The Lie bracket is
\[
[ k, X_4 ] = - c k \, .
\]
The spacetime is of isometry class 3 admitting an SCKV and HKV: Using the coordinate transformation $s = r \sin (\phi - \theta), t = r \cos (\phi - \theta)$
where $\phi = - \sigma \ln |u| / 2 c$ then $H = \delta (-1)^{\sigma - 2} u^{-2} (s^2 + t^2)^{-\sigma / 2} \exp [- 2 c \tan^{-1} (s / t) ]$.

The ${\cal S}_5$ specialization with $\sigma = 2$ is self-dual.

\subsection{Dual of isometry class 8}
The new function is $H= u^{-2} W(u^{-1}s) e^{-2 u^{-1} t}$.
The KV $X_2$ in (\ref{eq:KVsforcase8}) becomes an SCKV given by
\be
X_2 = u x^A \partial_A + u^2 \partial_u + \case{1}{2} x^A x_A \partial_v  \, ,
\label{eq:dualclass8SCKV}
\ee
and KV $X_3$ in (\ref{eq:KVsforcase8}) becomes an HKV given by
\be
\fl X_3 = - \epsilon \left( x^A \partial_A + u \partial_u + v \partial_v  \right)
- \rho \left( y \partial_v + u \partial_y \right)
- \sigma \left( z \partial_v + u \partial_z \right) \, , \qquad \psi = -\epsilon \, .
\label{eq:dualclass8HKV}
\ee
Thus we have ${\cal G}_3 \mapsto {\cal S}_3 \supset {\cal H}_2$. Thus the dual spacetime is an isometry class 1 specialization.

\subsection{Dual of isometry class 8 ($\epsilon = 0$)}

The new function is $H= u^{-2} W(u^{-1}s)$. We have ${\cal G}_4 \mapsto {\cal S}_4 \supset {\cal G}_3$. The KVs $X_3$ and $X_4$ remain KVs for the
new spacetime. For the $W(s) = K s^c$ specialization we have ${\cal H}_5 \mapsto {\cal S}_5 \supset {\cal H}_4$.

\subsection{Dual of isometry class 9}
The new metric function is $H= u^{-2} K \exp{ [2u^{-1}(\sigma z - \eta y)]}$.
The KV $X_5$ in (\ref{eq:KVcase9}) becomes the KV
\[
X_5 = \sigma \partial_y + \eta \partial_z,
\]
and the KV $-\eta X_3 - \sigma X_4$ becomes the KV
\[
-\eta X_3 - \sigma X_4 =  u \left( \eta \partial_z + \sigma\partial_y \right)
+ (\sigma y + \eta z ) \partial_v \, .
\]
The KV $X_2$ becomes the SCKV
\[
X_2 = u x^A \partial_A + u^2 \partial_u + \case{1}{2} x^A x_A \partial_v  \, ,
\]
and the KV $\sigma X_4 - \eta X_3$ becomes the HKV
\[
\fl \sigma X_4 - \eta X_3 =
2 \eta \sigma u \partial_u
+ (2 \eta \sigma v - \sigma y + \eta z) \partial_v
+ (2 \eta \sigma y - \sigma u) \partial_y
+ (2 \eta \sigma z + \eta u) \partial_z.
\]
The dual spacetime admits ${\cal S}_5 \supset {\cal H}_4$.

\subsection{Dual of isometry class 10}
The new dual spacetime also belongs to the isometry class 10 and the four KVs $X_i$ in (\ref{eq:KVcase10}) remain KVs and $Z$ remains
an HKV in the new spacetime. See reference \cite{tupper03} for details.

\section{Conclusions}

\label{sec:concl}
Given a particular isometry class one can in most instances say whether any other conformal symmetries are possible,
and one can certainly put an upper limit on the dimension of ${\cal C}$: For a general spacetime the dimension
of ${\cal S}$ is at most one greater than the dimension ${\cal H}$ and the maximum number of proper non-special
conformal Killing vectors in a type $N$ pp-wave spacetime is three.

We have identified three additional isometry classes not appearing in \cite{sippel86} and, since we do not impose the vacuum
condition from the outset, we have obtained spacetimes with conformal symmetries more general than in \cite{humberto83}.
The new isometry classes are class $1i$ (${\cal G}_3$) and class 8($\epsilon = 0$) (${\cal G}_4$) and their specializations.
Table \ref{tab:generalConfTransf} contains possibilities for further new isometry classes. However, these may correspond to known
isometry classes, either as they stand or under an appropriate coordinate transformation.

It is unknown whether there are any further spacetimes admitting the maximum number of three non-special CKV, other than
classes $A$, $B$ and $C$ in section \ref{sec:typeN}. However, we believe this to be unlikely. It is unknown whether the
specializations of isometry class 2 (in sections \ref{sec:typeN} and \ref{sec:SandG}),
class 3 (in sections  \ref{sec:typeN}, \ref{sec:SandG} and \ref{sec:typeNdual}) and class 4 (in sections  \ref{sec:typeN} and \ref{sec:SandG}) are the only possibilities.

\section*{Acknowledgments}

We would like to thank Graham Hall, Mohammed Patel, Malcolm MacCallum and Roy Maartens for useful discussions.

\section*{References}


\begin{thebibliography}{99}

\bibitem{hall90} Hall G S 1990 {\it J. Math. Phys.} {\bf 31} 1198

\bibitem{kramer80} Stephani H, Kramer D, MacCallum M A H, Hoenselaers C and Herlt E, 2003
{\it Exact Solutions of Einstein's Field Equations} 2nd edn (Cambridge: Cambridge University Press)

\bibitem{ehlers62} Ehlers J and Kundt W 1962 {\it Gravitation: An Introduction to Current
Research} ed L Witten, (New York: Wiley)
\nonum Jordan P, Ehlers J and Kundt W 1960 {\it Akad. Wiss. Lit. (Mainz) Abhandl. Math. - Nat. Kl.} {\bf 2} 21

\bibitem{sippel86} Sippel R and Goenner H 1986 {\it Gen. Rel. Grav.} {\bf 18} 1229

\bibitem{hall91} Hall G S and Steele J D 1991 {\it \JMP} {\bf  32} 1847

\bibitem{defrise75} Defrise-Carter L 1975 {\it Commun. Math. Phys.} {\bf 40} 273

\bibitem{maartens91} Maartens R and Maharaj S D 1991 {\it \CQG} {\bf 8} 503

\bibitem{hall92} Hall G S, Hossack A D and Pulham J R 1992 {\it J. Math. Phys.} {\bf 33} 1408

\bibitem{humberto83} Humberto Salazar I, Alberto Garcia D and Pleba\'{n}ski J F 1983
{\it J. Math. Phys.} {\bf 24} 2191

\bibitem{garcia81} Garcia Diaz A and Pleba\'{n}ski J F 1981
{\it J. Math. Phys.} {\bf 22} 2655

\bibitem{siklos85} Siklos S T C 1985 {\it Galaxies, Axisymmetric Systems and Relativity}
ed M A H MacCallum (Cambridge: Cambridge University Press)

\bibitem{aichelburg96} Aichelburg P C and Balasin H 1996 {\it \CQG} {\bf 13} 723

\bibitem{aichelburg97} Aichelburg P C and Balasin H 1997 {\it \CQG} {\bf 14} A31

\bibitem{tupper03} Tupper B O J, Keane A J, Hall G S, Coley A A and Carot J 2003 {\it \CQG} {\bf 20} 801

\bibitem{yano55} Yano K 1955 {\it The Theory of Lie Derivatives and its applications} (Amsterdam: North-Holland)

\bibitem{hall90ii} Hall G S 1990 {\it Gen. Rel. Grav.} {\bf 22} 203

\bibitem{barnes01} Barnes A 2001 {\it \CQG} {\bf 18} 5511

\end{thebibliography}
\end{document}